\newtheorem{thm}{Theorem}
\newtheorem{lemma}{Lemma}
\newtheorem{corollary}{Corollary}
\newcommand{\tr}{\mathrm{Tr }}
\newcommand{\bra}[1]{\langle {#1}|}
\newcommand{\ket}[1]{| {#1} \rangle}
\newcommand{\braket}[2]{\langle #1 | #2 \rangle}
\newcommand{\ens}{\mathcal{E}(r_1,\cdots,r_m)}
\newcommand{\pro}{\mathcal{P}(r_1,\cdots,r_m)}
\begin{document}

\title{A Structure of \\ Minimum Error Discrimination for \\ Linearly Independent States}

\author{Tanmay Singal}
\email{tanmaysingal@gmail.com}
\affiliation{Department of Applied Mathematics ,\\ Hanyang University, Ansan
Kyunggi-do, Korea}
\author{Eunsang Kim}
\email{eskim@hanyang.ac.kr}
\affiliation{Department of Applied Mathematics ,\\ Hanyang University, Ansan
Kyunggi-do, Korea}
\author{Sibasish Ghosh}
\email{sibasish@imsc.res.in}
\affiliation{Optics \& Quantum Information Group, The Institute of Mathematical Sciences, CIT Campus, Taramani, Chennai, 600 113, India}
\affiliation{Homi Bhabha National Institute, Training School Complex, Anushakti Nagar, Mumbai 400094, India}
\begin{abstract}
In this paper we study the Minimum Error Discrimination problem (MED) for ensembles of linearly independent (LI) states.  We define a bijective map from the set of those ensembles to itself and we show that the Pretty Good Measurement (PGM) and the  optimal measurement for the MED are related by  the map. In particular, the fixed points of the map are those ensembles for which the PGM is the optimal measurement. Also, we simplify the optimality conditions for the measurement of an ensemble of LI states.\end{abstract}

\keywords{minimum error discrimination, linearly independent states, mixed states pretty good measurement}

\maketitle

\section{Introduction}

In quantum state discrimination, one wishes to optimally ascertain which of a collection of states has been provided. In general, two parties, Alice and Bob, are involved in this scenario. We may formulate the discrimination problem in the following way. Let $\mathcal{H}$ be a $d$-dimensional Hilbert space. Alice prepares a quantum state $\rho_i$, from an ensemble of quantum states $\mathrm{P}=\{p_i, \rho_i\}_{i=1}^m$ with a priori probability $p_i$. Here the quantum states $\rho_i$ are density operators on $\mathcal{H}$ (i.e., $\rho_i \ge 0$, and $\mathrm{Tr} \, \rho_i = 1$ for all $1 \le i \le m$), and the a priori probabilities $p_1,\cdots, p_m$ are such that $p_i>0$ and $\sum_{i=1}^m p_i=1$. We assume that the basis vectors of $\mathrm{Range} \, \rho_i$ collectively span $\mathcal{H}$. Alice sends her state $\rho_i$ to Bob, without telling him what $i$ is. In order to find the value of $i$, Bob has to probe the state $\rho_i$ using an appropriate measurement. When the $\rho_i$'s are non-orthogonal, then they can't be perfectly distinguished. The average probability of error in his inference of the value of $i$ is $ \displaystyle \sum_{\substack{i,j=1 \\  i \neq j}}^{m} p_i \mathrm{Tr}\rho_i E_j$. Bob's objective is to obtain the positive operator valued measure (POVM), $\{E_i\}_{i=1}^m$,  which maximizes the probability of success,  i.e.,
\begin{equation}
\label{Ps}
p_s=  \underset{{\left\{E_i\right\}_{i=1}^{m}}}{\mathrm{Max}} \;  \sum_{i=1}^m p_i \tr \rho_i E_i,
\end{equation}
subject to the conditions $E_i\ge 0$ for all { $1 \le i \le m$} and $\sum_{i=1}^m E_i=\text{Id}$, where the maximum is taken over the set of all $m$-element POVMs. This optimization problem is known as Minimum Error Discrimination(MED), or the quantum hypothesis testing problem \cite{YKM75} \cite{Belavkin75}, \cite{BelavkinMaslov1988}, \cite{Helstrom76}. The POVM for which one obtains the maximum value is called the optimal POVM.

While there are many algorithms to iteratively solve the MED problem \cite{Helstrom82,JRF02,J09}, there are only a few ensembles of states for which closed-form expressions for the optimal POVMs and success probability have been obtained. Some prominent examples for these are the two state ensemble \cite{Belavkin75, Holeo1973}, ensembles of geometrically uniform states \cite{Belavkin75, Ban97,Sasaki97}, various ensembles of states for $\mathrm{dim \ } \mathcal{H} = 2$ \cite{Bae13, Ha13, Weir18}, etc.  In \cite{YF00}, Eldar and Forney considered a variant of the state discrimination problem for pure states, wherein the objective is to minimize the sum of the normed square of the distance between the pure states and corresponding measurement basis vectors. While most of the earlier results directly employ the optimality conditions (see Section \ref{sec:opt}) to solve the problem, some of the later results use a variety of different structures of the problem to solve it, for instance, the geometric structure of the problem \cite{Bae13, Bae2013, Ha13, Ha15}, and an algebraic structure \cite{Weir18}. { Some notable recent results include the following: exact analytic expressions for the optimal measurement strategies for trine states with arbitrary probabilities \cite{WHBC18}, algorithmically realizing the optimal measurement as a set of nested binary measurement  \cite{RPMG17}, employing results from group theory and representation theory to obtain the optimal success probability for geometrically uniform sets of states \cite{KGDS15}, etc. Many reviews on Quantum State discrimination can also be found in the literature, for instance see - \cite{C00, BHH04, BC09, B10,BK15}. }

A structure of the MED problem was discovered by V. P. Belavkin \cite{Belavkin75}. He showed that for each distinct optimal POVM for the MED of some ensemble $\mathrm{P} = \left\{ p_i, \rho_i \right\}_{i=1}^m$ of quantum states, one can find another ensemble of quantum states $\mathrm{Q}$, such that the pretty good measurement (PGM) of $\mathrm{Q}$ is the  optimal POVM of $\mathrm{P}$. In \cite{Mochon2006} it was shown that in the case of linearly independent pure states, one can relate $\mathrm{P}$ and $\mathrm{Q}$ by a bijective mapping. In this work we prove that such a bijective mapping exists on sets of ensembles of LI mixed states as well. Using this map one may solve the MED problem for LI mixed state ensembles. However to construct the map we need the optimal POVM, and hence without knowing the optimal POVM we cannot construct this map. { Our main results in this paper are as follows:  (1) we construct the inverse map explicitly. (2) We find the necessary and sufficient conditions for an ensemble of LI mixed states to be fixed points of this map (Theorem \ref{fix}). The fixed points of the map are ensembles whose optimal POVMs are their PGMs. Thus if an ensemble is a fixed point of this map, its optimal POVM is readily known. Thus we solve the MED problem for this class of ensembles. This is a generalisation of a result in \cite{Masahide98}. (3) In the course of inverting the map, we show that the optimality conditions for the MED of LI mixed state ensembles is actually simpler than the well-known optimality conditions (Theorem \ref{simple1}, Corollary \ref{cor}). This generalises a known result for LI pure states \cite{Helstrom82, Pozza15} to LI mixed states.}

This paper is organized as follows. In Section \ref{sec:opt} we give a brief summary of the optimality conditions for MED. In Section \ref{sec:Bela} we describe a structure of the MED problem which was introduced by Belavkin \cite{Belavkin75, BelavkinMaslov1988}. In Section \ref{sec:LI} we build on this structure to prove the existence of a map on the set of LI ensembles, such that the PGM of the image (under the map) is the optimal POVM of the pre-image. Also, at the end of this Section we show that the optimality conditions for MED of LI mixed ensembles is actually simpler than for the well-known optimality conditions for general ensembles of states. In Section \ref{R'} we prove that this map is bijective and explicitly construct its inverse. In Section \ref{sec:fix} we obtain necessary and sufficient conditions for the fixed points of this map. Section \ref{sec:last} concludes the paper.

\section{Optimality Conditions}
\label{sec:opt}

The set of all $m$-element POVMs is a convex set. Thus MED is a convex optimization problem. Thus, one can formulate the dual problem as follows: for a given ensemble $\mathrm{P} = \left\{ p_i, \rho_i \right\}_{i=1}^{m} $ of quantum states, find an operator $Z$ which minimizes $\tr Z$, subject to the condition $Z\ge p_i\rho_i$ for all $1 \le i \le m$. For the MED problem there is no duality gap and the dual problem can be solved to obtain the optimal POVM \cite{Mochon2006},\cite{EldrMegretskiVerg2003}, i.e.,
\begin{equation}
\label{duality}
p_s = \underset{Z \ge p_i \rho_i}{\mathrm{Min}} \; \mathrm{Tr}Z.
\end{equation}
We call the pair $(\{\Pi_i\}_{i=1}^m,Z)$  an {\it optimal dual pair} when $\{\Pi_i\}_{i=1}^m$ is an optimal POVM and $Z$ satisfies the duality \eqref{duality}.
For an optimal dual pair $(\{\Pi_i\}_{i=1}^m,Z)$ we have
\begin{equation}\label{duality2}
\sum_{i=1}^m p_i\tr(\Pi_i\rho_i) =\tr Z.
\end{equation}
{ The optimality conditions on the optimal POVM $\{ \Pi_i \}_{i=1}^{m}$ are given in the following theorem. For proofs we refer the reader to \cite{Belavkin75}, \cite{Helstrom76}, \cite{YKM75} and \cite{BC09}.}

\begin{thm}
\label{thm:optimality}
{ 
For an ensemble $\mathrm{P}=\{p_i,\rho_i\}_{i=1}^m$, an $m$-{\rm POVM} $\{\Pi_i\}_{i=1}^m$
is optimal if and only if it satisfies the following relations {\rm 1} and {\rm 2}: for all $i,j\in\{1,\cdots, m\}$,

\begin{enumerate}
\item[{\rm 1}.] \begin{equation}\label{dual4}
\Pi_j\left(p_j\rho_j-p_i\rho_i\right)\Pi_i=0,
\end{equation} {
\noindent or, equivalently,
\begin{equation}\label{dual1}
(Z-p_i\rho_i)\Pi_i=\Pi_i(Z-p_i\rho_i)=0,
\end{equation}
where
\begin{align}\label{dual2}
Z&=\sum_{i=1}^m p_i\Pi_i\rho_i=\sum_{i=1}^mp_i\rho_i\Pi_i.
\end{align}}
\item[{\rm 2}.] \begin{equation}\label{dual5}
Z\ge p_i\rho_i \Longleftrightarrow \sum_{j=1}^m p_j\rho_j\Pi_j-p_i\rho_i \ge 0,
\end{equation}
\hfill$\square$
\end{enumerate}}
\end{thm}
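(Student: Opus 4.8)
The plan is to recognise the maximisation in \eqref{Ps} as a semidefinite program and to read off the optimality conditions from SDP duality together with complementary slackness. The primal feasible set $\{(E_1,\dots,E_m):E_i\ge0,\ \sum_i E_i=\mathrm{Id}\}$ is compact and the objective is linear, so an optimal POVM exists; the program is strictly feasible (take $E_i=\mathrm{Id}/m$), so by Slater's condition there is no duality gap and the minimum in \eqref{duality} is attained at some feasible $Z$ (i.e.\ $Z\ge p_i\rho_i$ for all $i$). The only inequality needed throughout is the one-line weak-duality bound: for every POVM $\{E_i\}$ and every dual-feasible $Z$, $\sum_i p_i\tr\rho_iE_i\le\sum_i\tr(Z E_i)=\tr Z$, so in particular $p_s\le\tr Z$.

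For the ``if'' direction I would suppose $\{\Pi_i\}$ satisfies 1 and 2 and first record that the two formulations of condition 1 are equivalent by a short manipulation: summing \eqref{dual4} over $i$ gives $\Pi_j p_j\rho_j=\Pi_j W$ with $W:=\sum_i p_i\rho_i\Pi_i$; summing that over $j$ and using $\sum_j\Pi_j=\mathrm{Id}$ gives $W=\sum_j p_j\Pi_j\rho_j$, so $W$ is Hermitian, and with $Z:=W$ the relation $\Pi_j(Z-p_j\rho_j)=0$ becomes \eqref{dual1} after taking adjoints; conversely \eqref{dual1} gives $\Pi_j p_j\rho_j\Pi_i=\Pi_j Z\Pi_i=\Pi_j p_i\rho_i\Pi_i$, which is \eqref{dual4}. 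Then condition 2 says the operator $Z$ of \eqref{dual2} is dual feasible, while \eqref{dual1} gives $\tr Z=\sum_i\tr(Z\Pi_i)=\sum_i p_i\tr\rho_i\Pi_i$, i.e.\ \eqref{duality2}; since $p_s\le\tr Z$ and $\{\Pi_i\}$ is a POVM realising the value $\tr Z$, we conclude $p_s=\sum_i p_i\tr\rho_i\Pi_i$, so $\{\Pi_i\}$ is optimal.

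For the ``only if'' direction, let $\{\Pi_i\}$ be optimal and $Z$ a dual optimum, so $\tr Z=p_s=\sum_i p_i\tr\rho_i\Pi_i$. Rewriting this as $\sum_i\tr\big((Z-p_i\rho_i)\Pi_i\big)=0$, each summand is the trace of a product of two positive operators, hence $\ge0$, so $\tr\big((Z-p_i\rho_i)\Pi_i\big)=0$ for every $i$. For positive $A,B$, $\tr(AB)=0$ forces $\tr\big((A^{1/2}B^{1/2})(A^{1/2}B^{1/2})^{\dagger}\big)=0$, hence $A^{1/2}B^{1/2}=0$ and $AB=BA=0$; with $A=Z-p_i\rho_i$ and $B=\Pi_i$ this is exactly \eqref{dual1}. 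Summing $Z\Pi_i=p_i\rho_i\Pi_i$ over $i$, and separately its adjoint, identifies $Z$ with the operator of \eqref{dual2}; condition 2 is then just the feasibility $Z\ge p_i\rho_i$, and the equivalence displayed in \eqref{dual5} follows on substituting $Z=\sum_j p_j\rho_j\Pi_j$. Combined with the equivalence of \eqref{dual4} and \eqref{dual1} established above, this finishes the proof.

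I expect the only genuinely nontrivial ingredient to be the absence of a duality gap with attainment of the dual minimum — precisely what licenses the complementary-slackness step in the ``only if'' part; this is the standard Slater argument for SDPs and is already quoted as \eqref{duality}, so it can be cited rather than reproved. Everything else is bookkeeping: the algebraic equivalence of \eqref{dual4} and \eqref{dual1}, and the elementary fact that positive operators whose trace-product vanishes annihilate each other. A self-contained alternative avoiding duality would perturb an optimal POVM along feasible directions, $\Pi_i\mapsto\Pi_i+\varepsilon(X-\Pi_i)$, and extract first-order stationarity as in \cite{YKM75,Helstrom76}; the duality route is shorter and also produces the certificate operator $Z$ used later in the paper.
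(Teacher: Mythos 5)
Your proof is correct. Note that the paper itself does not prove this theorem --- it defers to the cited references (Belavkin, Helstrom, Yuen--Kennedy--Lax, Barnett--Croke) --- so there is no in-paper argument to compare against; what you have written is essentially the standard duality proof found in those sources. The two directions are sound: the weak-duality bound $\sum_i p_i\tr(\rho_i E_i)\le \tr Z$ for any POVM and any dual-feasible $Z$, combined with the trace identity $\tr Z=\sum_i p_i\tr(\rho_i\Pi_i)$ extracted from \eqref{dual1}, gives sufficiency; and the complementary-slackness step $\tr\bigl((Z-p_i\rho_i)\Pi_i\bigr)=0\Rightarrow (Z-p_i\rho_i)\Pi_i=0$ for positive operators gives necessity. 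Your bookkeeping showing that \eqref{dual4} and \eqref{dual1} are equivalent (and in particular that $\sum_i p_i\Pi_i\rho_i=\sum_i p_i\rho_i\Pi_i$, so the $Z$ of \eqref{dual2} is well defined and Hermitian) is a detail the theorem statement glosses over, and you handle it correctly. The one ingredient that genuinely needs a citation rather than a one-line argument is attainment of the dual minimum with zero gap; you correctly identify this and it is exactly what the paper's equation \eqref{duality} asserts, so leaning on it is legitimate.
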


\noindent In \cite{Bae2013} it was established that the operator $Z$ from the optimal dual pair is unique, whereas the optimal POVM $\left\{ \Pi_i \right\}_{i=1}^{m}$ may not be unique.

\section{A Structure for the MED Problem}
\label{sec:Bela}

{ In this section, we study a mathematical structure for any general ensemble of states. This structure was first presented in \cite{BelavkinMaslov1988}. In Section \ref{sec:LI}, we show how this structure is modified when the states are LI.}

Let $(\{\Pi_i\}_{i=1}^m,Z)$ be the optimal dual pair for an ensemble $\mathrm{P}=\{p_i,\rho_i\}_{i=1}^m$ of quantum states. We construct an ensemble of quantum states associated with  $(\{\Pi_i\}_{i=1}^{m},Z)$. Let us define
\begin{equation}\label{sigma}
\sigma_i := \frac{Z\Pi_i Z}{\tr(Z^2\Pi_i)}, \; \text{for all} \; 1 \le i \le m,
\end{equation}
and
\begin{equation}\label{qii}
q_i :=\frac{\tr(Z^2\Pi_i )}{\tr(Z^2)},  \; \text{for all} \; 1 \le i \le m.
\end{equation}
{Since $\{ \Pi_i \}_{i=1}^{m}$ is a POVM},  $\sum_{i=1}^m q_i=1$ and
\begin{equation}\label{qisi}
q_i\sigma_i=\frac{Z\Pi_i Z}{\tr(Z^2)},  \; \text{for all} \; 1 \le i \le m.
\end{equation}

\begin{lemma}\label{lemma1}
Let $\mathrm{P}=\{p_i,\rho_i\}_{i=1}^m$ be an ensemble of quantum states and let $(\{\Pi_i\}_{i=1}^{m},Z)$ be an
optimal dual pair for $\mathrm{P}$. Then the ensemble $\mathrm{Q}=\{q_i,\sigma_i\}_{i=1}^m$ of quantum states defined in {\rm (\ref{sigma})} and {\rm (\ref{qii})} satisfy the following properties:
\begin{enumerate}
\item[(a)] $\sigma_i\ge 0$ for all $i=1,\cdots, m$,
\item[(b)] $\mathrm{Tr}(\sigma_i)=1$ for all $i=1,\cdots, m$,
\item[(c)] $\mathrm{Range} \; q_i\sigma_i \subseteq \mathrm{Range} \; p_i\rho_i$ for all $i=1,\cdots, m$.
\end{enumerate}
\end{lemma}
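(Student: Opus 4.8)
The plan is to verify (a) and (b) by direct manipulation of the definitions \eqref{sigma}--\eqref{qii}, and to derive (c) from the first optimality condition \eqref{dual1} in Theorem~\ref{thm:optimality}. Throughout I use that $Z\ge p_1\rho_1\ge 0$, so $Z$ is positive semidefinite and in particular Hermitian, and that each $\Pi_i\ge 0$. In fact the standing assumption that the ranges of the $\rho_i$ collectively span $\mathcal H$ upgrades this to $Z>0$: if $Z\ket\psi=0$ then $0=\bra\psi Z\ket\psi\ge p_i\bra\psi\rho_i\ket\psi\ge 0$ forces $\rho_i^{1/2}\ket\psi=0$, i.e. $\ket\psi\perp\mathrm{Range}\,\rho_i$, for every $i$, hence $\ket\psi=0$.

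For (a): since $Z^\dagger=Z$, we have $Z\Pi_i Z=(Z\Pi_i^{1/2})(Z\Pi_i^{1/2})^\dagger\ge 0$, equivalently $\bra\psi Z\Pi_i Z\ket\psi=\bra{Z\psi}\Pi_i\ket{Z\psi}\ge 0$ for all $\ket\psi$; dividing by the positive scalar $\tr(Z^2\Pi_i)=\tr(Z\Pi_i Z)$ gives $\sigma_i\ge 0$. For (b): cyclicity of the trace gives $\tr\sigma_i=\tr(Z\Pi_i Z)/\tr(Z^2\Pi_i)=\tr(Z^2\Pi_i)/\tr(Z^2\Pi_i)=1$.

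The substantive part is (c). From \eqref{dual1}, $(Z-p_i\rho_i)\Pi_i=\Pi_i(Z-p_i\rho_i)=0$, so $Z\Pi_i=p_i\rho_i\Pi_i$ and $\Pi_i Z=p_i\Pi_i\rho_i$. Substituting these in turn,
\[
Z\Pi_i Z=p_i\,\rho_i\,\Pi_i Z=p_i^2\,\rho_i\,\Pi_i\,\rho_i .
\]
Hence, by \eqref{qisi}, $q_i\sigma_i=Z\Pi_i Z/\tr(Z^2)=\bigl(p_i^2/\tr(Z^2)\bigr)\,\rho_i\Pi_i\rho_i$. Since $\mathrm{Range}(\rho_i A)=\rho_i(\mathrm{Range}\,A)\subseteq\rho_i(\mathcal H)=\mathrm{Range}\,\rho_i$ for any operator $A$, taking $A=\Pi_i\rho_i$ yields $\mathrm{Range}(q_i\sigma_i)\subseteq\mathrm{Range}\,\rho_i=\mathrm{Range}(p_i\rho_i)$, which is (c).

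The computations in (a)--(c) are routine once \eqref{dual1} is available; the one thing that genuinely needs attention is well-definedness of $\sigma_i$ and $q_i$, i.e. $\tr(Z^2\Pi_i)>0$. Since $Z\Pi_i Z\ge 0$, vanishing of its trace would force $Z\Pi_i Z=0$, hence $\Pi_i^{1/2}Z=0$, and then $\Pi_i=0$ because $Z$ is invertible; so $\tr(Z^2\Pi_i)>0$ exactly when $\Pi_i\neq 0$. Thus I expect the statement to be read for POVMs all of whose elements are nonzero — which is automatic in the linearly independent case the paper focuses on — or else under the convention $q_i=0$, $q_i\sigma_i=0$ whenever $\Pi_i=0$.
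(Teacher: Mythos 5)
Your proposal is correct and follows essentially the same route as the paper: (a) and (b) are immediate from the definitions, and (c) comes from using the optimality condition \eqref{dual1} to write $Z\Pi_i Z = p_i^2\rho_i\Pi_i\rho_i$, whence $\mathrm{Range}(q_i\sigma_i)\subseteq\mathrm{Range}(p_i\rho_i)$. Your extra remarks on $Z>0$ and on the well-definedness of $q_i,\sigma_i$ when $\Pi_i=0$ are sensible housekeeping but do not change the argument.
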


{\it Proof.} Conditions (a) and (b) follow directly from equations \eqref{sigma} and \eqref{qii}. By (\ref{dual5}), the operator $Z$ is invertible and by the definition of $\sigma_i$, we get $\text{rank }\sigma_i=\text{rank }\Pi_i$ and from equations (\ref{dual4}) we obtain that $Z \Pi_i Z = p_i \rho_i \Pi_i Z = p_i^2 \rho_i \Pi_i \rho_i$, for all $1 \le i \le m$. This implies that $\mathrm{Range} \left( q_i\sigma_i\right) \subseteq \mathrm{Range}\left( p_i\rho_i\right)$ for all $ 1 \le i \le m$. \hfill\qed



Let $\mathrm{Q}=\{q_i,\sigma_i\}_{i=1}^m$ be an ensemble of  quantum states such that $\sigma = \sum_{i=1}^m q_i \sigma_i > 0$. Then, the PGM of $\mathrm{Q}$ is defined as follows: for each $i\in\{1,\cdots,m\}$, let\footnote{One can define a PGM for an arbitrary ensemble of states $Q = \left\{q_i, \sigma_i \right\}_{i=1}^m$ using equation \eqref{PGMQ}. This is also true when $\mathrm{supp} \; \sigma$ is strictly smaller than $\mathcal{H}$. In such cases, we restrict the space to $\mathrm{span} \left\{ \mathrm{supp} \; \sigma_i \right\}_{i=1}^{m}$ to define $\sigma^{-1/2}$. We will see (in Theorem \ref{thm:Bela} and from Section \ref{sec:LI} onwards) that we only employ $\mathrm{Q}$ for which $\mathrm{supp} \; \sigma  = \mathcal{H}$, and hence $\sigma$ is always invertible on $\mathcal{H}$.}
\begin{equation}\label{PGMQ}
E_i :=\sigma^{-1/2}(q_i\sigma_i)\sigma^{-1/2}.
\end{equation}
Then it is easy to see that for all $i\in\{1,\cdots,m\}$,  $E_i\ge 0$ and
\begin{align*}
\sum_{i=1}^m E_i&=\sum_{i=1}^m \sigma^{-1/2}(q_i\sigma_i)\sigma^{-1/2}\\
&=\sigma^{-1/2}\sum_{i=1}^mq_i\sigma_i\sigma^{-1/2}\\
&=\sigma^{-1/2}\sigma\sigma^{-1/2}=\text{Id}.
\end{align*}
Thus  we see that $\{E_i\}_{i=1}^m$ is a POVM. 
{
\begin{thm}[\cite{BelavkinMaslov1988}]
\label{thm:Bela}
Let $\mathrm{P}=\{p_i,\rho_i\}_{i=1}^m$ be an ensemble of quantum states with an optimal dual pair $(\{\Pi_i\}_{i=1}^m,Z)$, and let  $\mathrm{Q}=\{q_i,\sigma_i\}_{i=1}^m$ be the ensemble constructed from the optimal dual pair using equations \eqref{sigma} and \eqref{qii}.
Then $\{ \Pi_i \}_{i=1}^m$ is the $\mathrm{PGM}$ of $Q$.
\end{thm}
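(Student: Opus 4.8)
The plan is to compute the PGM of $\mathrm{Q}$ directly from its definition \eqref{PGMQ} and show that each $E_i$ coincides with $\Pi_i$. The whole argument hinges on the identity \eqref{qisi}, which already expresses $q_i\sigma_i$ in terms of $Z$ and $\Pi_i$, so almost no further work with the $\sigma_i$'s is needed.

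First I would identify the average state of $\mathrm{Q}$. Summing \eqref{qisi} over $i$ and using $\sum_{i=1}^m\Pi_i=\mathrm{Id}$ gives
\begin{equation*}
\sigma:=\sum_{i=1}^m q_i\sigma_i=\frac{Z\bigl(\sum_{i=1}^m\Pi_i\bigr)Z}{\tr(Z^2)}=\frac{Z^2}{\tr(Z^2)}.
\end{equation*}
Next I would check that $Z$ is positive definite, so that $\sigma$ is invertible on all of $\mathcal{H}$ and the PGM of $\mathrm{Q}$ is well defined (this is the situation anticipated in the footnote before the theorem). Indeed $Z\ge p_i\rho_i\ge 0$ for every $i$ by \eqref{dual5}; if $Zv=0$ for some $v$, then $0\le\langle v|p_i\rho_i|v\rangle\le\langle v|Z|v\rangle=0$ forces $\rho_i v=0$ for all $i$, contradicting the standing assumption that the ranges of the $\rho_i$ span $\mathcal{H}$. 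Hence $Z>0$, so $(Z^2)^{1/2}=Z$ and therefore $\sigma^{-1/2}=\sqrt{\tr(Z^2)}\,Z^{-1}$.

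Finally I would substitute into \eqref{PGMQ}, using \eqref{qisi} once more:
\begin{align*}
E_i&=\sigma^{-1/2}(q_i\sigma_i)\sigma^{-1/2}\\
&=\sqrt{\tr(Z^2)}\,Z^{-1}\cdot\frac{Z\Pi_i Z}{\tr(Z^2)}\cdot\sqrt{\tr(Z^2)}\,Z^{-1}\\
&=Z^{-1}Z\Pi_i ZZ^{-1}=\Pi_i,
\end{align*}
for each $i$, which is exactly the assertion that $\{\Pi_i\}_{i=1}^m$ is the PGM of $\mathrm{Q}$. There is no serious obstacle here: the only steps requiring care are the computation $\sigma=Z^2/\tr(Z^2)$ and the invertibility of $Z$, and both are immediate from \eqref{qisi}, \eqref{dual5}, and the full-rank hypothesis on the ensemble. (Lemma \ref{lemma1} already guarantees that $\mathrm{Q}$ is a legitimate ensemble, so the same computation also confirms a posteriori that the $E_i$ form a bona fide POVM.)
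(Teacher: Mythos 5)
Your proof is correct and follows essentially the same route as the paper's: sum \eqref{qisi} to get $\sigma = Z^2/\tr(Z^2)$, hence $\sigma^{-1/2}=\sqrt{\tr(Z^2)}\,Z^{-1}$, and substitute into \eqref{PGMQ}. The only addition is your explicit verification that $Z>0$ (which the paper asserts in passing in the proof of Lemma \ref{lemma1} rather than here); that is a worthwhile detail but does not change the argument.
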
}
\begin{proof} From \eqref{qisi} we get,
\begin{equation}
\label{sigmaZ}
\sigma = \sum_{i=1}^{m} q_i\sigma_i = \dfrac{Z^2}{\tr(Z^2)},
\end{equation}
and thus $\sigma^{-1/2} = \sqrt{\tr(Z^2)} Z^{-1}$. Now from (\ref{qisi}) and (\ref{PGMQ}), we get for all $ 1 \le i \le m$
\begin{align*}
E_i&=\sigma^{-1/2} \left( q_i \sigma_i \right) \sigma^{-1/2}\\
&=\sqrt{\tr(Z^2)} Z^{-1}
\frac{Z\Pi_i Z}{\tr(Z^2)}\sqrt{\tr(Z^2)} Z^{-1}\\
&=\Pi_i.
\end{align*}
\end{proof}
This shows that the PGM of $\mathrm{Q}$ is the optimal POVM for MED of $\mathrm{P}$. In particular, in the case of pure states we have a nice property which is proved in \cite{Belavkin75, Mochon2006}.

\begin{thm}
\label{thm:Mochon}
Let $\mathrm{P}=\{p_i, |\psi_i\rangle\langle\psi_i|\}_{i=1}^m$ be an ensemble of pure states on a $d$-dimensional Hilbert space $\mathcal{H}$ and let $\{E_i\}_{i=1}^m$ be the PGM of the pure state ensemble $\mathrm{Q}=\{q_i, |\psi_i\rangle\langle\psi_i|\}_{i=1}^m$. For all $i=1,\cdots,m$, if  $p_i\langle\psi_i|\rho_q^{-1/2}|\psi_i\rangle =C$, where $\rho_q = \sum_{i=1}^{m} q_i \ket{\psi_i}\bra{\psi_i}$ and $C$ is a constant so that $\sum_{i=1}^m p_i=1$, then $\{E_i\}_{i=1}^m$ is the optimal POVM for $\mathrm{P}$.
\hfill\qed
\end{thm}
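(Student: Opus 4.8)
The plan is to verify directly that the PGM $\{E_i\}_{i=1}^m$ of $\mathrm{Q}$ satisfies the two optimality conditions of Theorem~\ref{thm:optimality} for the ensemble $\mathrm{P}$. Since the $\ket{\psi_i}$ collectively span $\mathcal{H}$, the operator $\rho_q$ is invertible, so the PGM is well defined and $\{E_i\}_{i=1}^m$ is automatically a POVM. It is convenient to set $\ket{\widetilde\psi_i} := \sqrt{q_i}\,\rho_q^{-1/2}\ket{\psi_i}$, so that $E_i = \ketbra{\widetilde\psi_i}{\widetilde\psi_i}$, while $\braket{\widetilde\psi_i}{\psi_j} = \sqrt{q_i}\,\bra{\psi_i}\rho_q^{-1/2}\ket{\psi_j}$ and $\braket{\psi_i}{\widetilde\psi_j} = \sqrt{q_j}\,\bra{\psi_i}\rho_q^{-1/2}\ket{\psi_j}$. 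With this notation all the relevant products become rank-one bookkeeping.

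For condition~1, I would expand $E_j(p_j\rho_j - p_i\rho_i)E_i$. Each of the two terms is a rank-one operator proportional to $\ketbra{\widetilde\psi_j}{\widetilde\psi_i}$, and collecting the scalar coefficients one finds that the whole expression equals the common factor $\sqrt{q_iq_j}\,\bra{\psi_j}\rho_q^{-1/2}\ket{\psi_i}\,\ketbra{\widetilde\psi_j}{\widetilde\psi_i}$ times the difference $p_j\bra{\psi_j}\rho_q^{-1/2}\ket{\psi_j} - p_i\bra{\psi_i}\rho_q^{-1/2}\ket{\psi_i}$. The hypothesis $p_i\bra{\psi_i}\rho_q^{-1/2}\ket{\psi_i} = C$ for every $i$ makes this difference vanish, so \eqref{dual4} holds. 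By the equivalence asserted in Theorem~\ref{thm:optimality}, the associated operator is $Z = \sum_i p_i\rho_i E_i$; carrying out this sum with the same hypothesis gives $Z = C\sum_i q_i\ketbra{\psi_i}{\psi_i}\rho_q^{-1/2} = C\rho_q^{1/2}$, which is positive and invertible (consistent with the invertibility of $Z$ noted after Lemma~\ref{lemma1}), and Hermitian, so that \eqref{dual2} is satisfied as well.

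Condition~2 then reduces to the operator inequality $C\rho_q^{1/2} \ge p_i\ketbra{\psi_i}{\psi_i}$ for each $i$, and I would establish it by Cauchy--Schwarz. For an arbitrary $\ket{\phi}$, writing $\ket{\chi} = \rho_q^{1/4}\ket{\phi}$ we have $\bra{\phi}\rho_q^{1/2}\ket{\phi} = \braket{\chi}{\chi}$ and $\braket{\psi_i}{\phi} = \braket{\rho_q^{-1/4}\psi_i}{\chi}$, whence $\lvert\braket{\psi_i}{\phi}\rvert^2 \le \bra{\psi_i}\rho_q^{-1/2}\ket{\psi_i}\,\bra{\phi}\rho_q^{1/2}\ket{\phi}$; multiplying by $p_i$ and using $p_i\bra{\psi_i}\rho_q^{-1/2}\ket{\psi_i} = C$ yields $p_i\lvert\braket{\psi_i}{\phi}\rvert^2 \le C\bra{\phi}\rho_q^{1/2}\ket{\phi}$, which is exactly \eqref{dual5}. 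Since both conditions hold, Theorem~\ref{thm:optimality} gives that $\{E_i\}_{i=1}^m$ is optimal for $\mathrm{P}$.

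The only mildly delicate point is condition~2: one must split $\rho_q^{1/2} = \rho_q^{1/4}\rho_q^{1/4}$ and apply Cauchy--Schwarz in precisely the right way so that the quantity $\bra{\psi_i}\rho_q^{-1/2}\ket{\psi_i}$ appears, which is exactly what the hypothesis controls; everything else is routine rank-one algebra. No case distinction about the support of $\rho_q$ is needed here, because the states $\rho_i$ are assumed to span $\mathcal{H}$.
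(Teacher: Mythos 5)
Your proof is correct. The paper itself gives no proof of Theorem \ref{thm:Mochon} --- it is stated with a \qed and attributed to the cited references --- so there is nothing to diverge from; your direct verification of conditions 1 and 2 of Theorem \ref{thm:optimality} (the rank-one cancellation using $p_i\langle\psi_i|\rho_q^{-1/2}|\psi_i\rangle=C$ to get $E_j(p_j\rho_j-p_i\rho_i)E_i=0$ and $Z=C\rho_q^{1/2}$, followed by the Cauchy--Schwarz argument with the split $\rho_q^{1/2}=\rho_q^{1/4}\rho_q^{1/4}$ to obtain $Z\ge p_i\rho_i$) is complete and is essentially the standard argument from the cited sources.
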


\section{Structure for Linearly Independent States of the MED Problem}
\label{sec:LI}

{ In this section we show the following: in Theorem \ref{thm:Bela}, when the states in $\mathrm{P}$ are LI and mixed, then $\mathrm{P}$ is {\it mapped} to $\mathrm{Q}$. This result is a generalization of part of a result in \cite{Mochon2006}, where it was derived for the LI pure state case.}



Consider an ensemble $\mathrm{P}=\{p_i,\rho_i\}_{i=1}^m$ of quantum states on an $d$-dimensional Hilbert space $\mathcal{H}$. Assume that the eigenvectors of $\rho_i$, $1\le i\le m$ collectively span $\mathcal{H}$. Since each density operator $\rho_i$ is { h}ermitian,  it has the eigendecomposition as {$\rho_i=\sum_{k=1}^{r_i} \lambda_{ik} \ket{\phi_{ik}}\bra{\phi_{ik}}$, where $\braket{\phi_{ik}}{\phi_{ik'}} = \delta_{kk'},$ for $1 \le k, k' \le r_i$. Thus, $\text{Rank }\rho_i=r_i$.} The set of quantum states $\{\rho_i\}_{i=1}^m$ is said to be {\it linearly independent}  if the set of vectors $\{|\phi_{ik}\rangle \mid 1\le k\le r_i, \ 1\le i\le m\}$ are linearly independent. Since the set $\{|\phi_{ik}\rangle \mid 1\le k\le r_i, \ 1\le i\le m\}$ spans $\mathcal{H}$, we have $\sum_{i=1}^mr_i=d$. An ensemble $\mathrm{P}=\{p_i,\rho_i\}_{i=1}^m$ of quantum states is said to be a {\it LI  state ensemble} if the set $\{\rho_i\}_{i=1}^m$ of density operators form a linearly independent set.

Define $\mathcal{E}(r_1,\cdots,r_m)$ to be the set of all LI  state ensembles $\mathrm{P}=\{p_i,\rho_i\}_{i=1}^m$  such that $\text{Rank }\rho_i=r_i$ for all $ 1\le i \le m$. In \cite{Eldar2003}, it was shown that for each element in $\mathcal{E}(r_1,\cdots,r_m)$, the optimal POVM is a projective measurement.  More explicitly, we have

\begin{thm}[\cite{Eldar2003}]\label{thm:Eldar03}   Let $\mathrm{P}=\{p_i,\rho_i\}_{i=1}^m\in \mathcal{E}(r_1,\cdots,r_m)$. Then the optimal POVM { for $\mathrm{P}$, i.e.,}  $\{\Pi_i\}_{i=1}^m$ is a projective measurement. In other words, it satisfies
$\Pi_i\Pi_j=\delta_{ij}\Pi_i$, $\Pi_i^\dag = \Pi_i$,  for all $i,j\in \{1,\cdots,m\}$, and $\sum_{i=1}^m \Pi_i=\mathrm{Id}$. { Also, $\text{\rm Rank}(\rho_i)=\text{\rm Rank}(\Pi_i)$ for all $i=1,\cdots, m$}. \hfill\qed
\end{thm}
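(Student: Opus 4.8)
The plan is to derive the result entirely from the optimality conditions of Theorem~\ref{thm:optimality}, with linear independence entering only through the observation that the eigenvectors of the $\rho_i$, collected over all $i$, form a basis of $\mathcal{H}$; equivalently, $\mathcal{H}=\bigoplus_{i=1}^{m}\mathrm{Range}\,\rho_i$ as an internal (not necessarily orthogonal) direct sum, with $\sum_{i=1}^{m}r_i=d$. I would fix an optimal POVM $\{\Pi_i\}_{i=1}^{m}$ for $\mathrm{P}$ together with the operator $Z=\sum_{i}p_i\rho_i\Pi_i$ of \eqref{dual2}; by Theorem~\ref{thm:optimality} one then has $(Z-p_i\rho_i)\Pi_i=0$ and $Z\ge p_i\rho_i$ for every $i$.

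The first step is to show that $Z$ is invertible. Since $Z-p_i\rho_i\ge 0$ and $p_i\rho_i\ge 0$, we have $Z\ge 0$, so $Z|\psi\rangle=0$ gives $\langle\psi|Z|\psi\rangle=0$; combined with $0\le p_i\langle\psi|\rho_i|\psi\rangle\le\langle\psi|Z|\psi\rangle$ and $p_i>0$, this forces $\langle\psi|\rho_i|\psi\rangle=0$, hence $\rho_i|\psi\rangle=0$, for every $i$. Thus $|\psi\rangle$ is orthogonal to every $\mathrm{Range}\,\rho_i$, and since these subspaces span $\mathcal{H}$ we conclude $|\psi\rangle=0$. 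Only the spanning hypothesis is needed for this part.

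Next, $(Z-p_i\rho_i)\Pi_i=0$ rearranges to $Z\Pi_i=p_i\rho_i\Pi_i$, hence $\Pi_i=p_i\,Z^{-1}\rho_i\Pi_i$, so $\mathrm{Range}\,\Pi_i\subseteq Z^{-1}\mathrm{Range}\,\rho_i=:S_i$, a subspace of dimension $r_i$ because $Z^{-1}$ is invertible. Applying the bijection $Z^{-1}$ to $\mathcal{H}=\bigoplus_i\mathrm{Range}\,\rho_i$ yields $\mathcal{H}=\bigoplus_i S_i$. Now for any $|\psi\rangle\in\mathcal{H}$ the completeness relation gives $|\psi\rangle=\sum_i\Pi_i|\psi\rangle$ with the $i$-th summand lying in $S_i$, so by uniqueness of the decomposition along $\bigoplus_i S_i$ the operator $\Pi_i$ must coincide with the (in general oblique) projector onto $S_i$ with kernel $\bigoplus_{j\neq i}S_j$. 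Such projectors satisfy $\Pi_i^2=\Pi_i$ and $\Pi_i\Pi_j=0$ for $i\neq j$; since a POVM element is Hermitian, each Hermitian idempotent $\Pi_i$ is in fact an orthogonal projection, and $\mathrm{Rank}\,\Pi_i=\dim S_i=r_i=\mathrm{Rank}\,\rho_i$. Together with $\sum_i\Pi_i=\mathrm{Id}$, these are precisely the assertions of the theorem.

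The substantive points are the invertibility of $Z$ in the first step --- which is what makes $S_i=Z^{-1}\mathrm{Range}\,\rho_i$ an honest $r_i$-dimensional subspace --- and the recognition that the mere containment $\mathrm{Range}\,\Pi_i\subseteq S_i$ inside an arbitrary direct-sum decomposition, once combined with $\sum_i\Pi_i=\mathrm{Id}$, already pins $\Pi_i$ down as an idempotent; Hermiticity then upgrades it to an orthogonal projection at no cost. Everything else is bookkeeping with \eqref{dual1} and \eqref{dual2} together with a dimension count (in particular $\sum_i\mathrm{Rank}\,\Pi_i=d$ is forced, consistent with $\sum_i r_i=d$).
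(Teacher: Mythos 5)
The paper states this theorem as an imported result from \cite{Eldar2003} and supplies no proof of its own, so there is no internal proof to compare against; your argument is correct and complete. The positivity chain $Z\ge p_i\rho_i\ge 0$ together with the spanning hypothesis gives $Z>0$; the identity $Z\Pi_i=p_i\rho_i\Pi_i$ from \eqref{dual1} places $\mathrm{Range}\,\Pi_i$ inside the $r_i$-dimensional subspace $S_i=Z^{-1}\mathrm{Range}\,\rho_i$; and since linear independence makes the $S_i$ a direct-sum decomposition of $\mathcal{H}$ while $\sum_i\Pi_i=\mathrm{Id}$, uniqueness of components pins each $\Pi_i$ down as the idempotent onto $S_i$ along the others, hence (being Hermitian) an orthogonal projector of rank $r_i$. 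This is essentially the argument of the cited reference, so nothing further is needed.
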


Furthermore we have,

\begin{thm}\label{thm:uniquePOVM}
Let $\mathrm{P}=\{p_i,\rho_i\}_{i=1}^m\in \mathcal{E}(r_1,\cdots,r_m)$. {Then its optimal POVM is unique.}
\end{thm}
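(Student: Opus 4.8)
The plan is to show that, for an ensemble in $\mathcal{E}(r_1,\cdots,r_m)$, every optimal POVM must equal the family of orthogonal projections onto the subspaces $\ker(Z-p_i\rho_i)$, where $Z$ is the operator of an optimal dual pair; since $Z$ is already known to be unique, this forces uniqueness of the POVM. I would begin by collecting the ingredients already in hand: an optimal POVM $\{\Pi_i\}_{i=1}^m$ exists by compactness of the set of $m$-POVMs; by Theorem~\ref{thm:Eldar03} any such POVM is a projective measurement with $\mathrm{Rank}\,\Pi_i=r_i$ for every $i$; the dual operator $Z$ is unique by \cite{Bae2013}; and, since $Z\ge p_i\rho_i$ for all $i$ while the ranges of the $\rho_i$ collectively span $\mathcal{H}$, $Z$ is invertible (as noted in the proof of Lemma~\ref{lemma1}).

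Next I would fix an arbitrary optimal POVM $\{\Pi_i\}_{i=1}^m$ together with its (unique) $Z$. The optimality condition \eqref{dual1} gives $(Z-p_i\rho_i)\Pi_i=0$, hence $\mathrm{Range}\,\Pi_i\subseteq\ker(Z-p_i\rho_i)$, so $\dim\ker(Z-p_i\rho_i)\ge\mathrm{Rank}\,\Pi_i=r_i$. The key step is the opposite bound $\dim\ker(Z-p_i\rho_i)\le r_i$: if $v\in\ker(Z-p_i\rho_i)$ then $Zv=p_i\rho_i v\in\mathrm{Range}\,\rho_i$, so $v\in Z^{-1}\bigl(\mathrm{Range}\,\rho_i\bigr)$, and this last subspace has dimension $\dim\mathrm{Range}\,\rho_i=r_i$ because $Z$ is invertible. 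The two inequalities give $\mathrm{Range}\,\Pi_i=\ker(Z-p_i\rho_i)$.

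To finish, I would use that $\Pi_i$ is a Hermitian idempotent (Theorem~\ref{thm:Eldar03}), hence equal to the orthogonal projection onto $\mathrm{Range}\,\Pi_i=\ker(Z-p_i\rho_i)$; since $Z$ and the $\rho_i$ depend only on $\mathrm{P}$, each $\Pi_i$ is uniquely determined, and therefore so is the optimal POVM.

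I expect the only genuine content to be the dimension count $\dim\ker(Z-p_i\rho_i)\le r_i$ in the second paragraph; everything else is bookkeeping with results already established. The one point to handle carefully is that the entire argument must be carried out for an \emph{arbitrary} optimal POVM, so as not to presuppose the uniqueness being proved — which is unproblematic here, since every step uses only generic optimality properties of $\{\Pi_i\}_{i=1}^m$ and the uniqueness and invertibility of $Z$.
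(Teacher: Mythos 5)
Your proof is correct, but it takes a genuinely different route from the paper's. The paper compares two optimal POVMs $\{\Pi_i\}_{i=1}^m$ and $\{\overline{\Pi}_i\}_{i=1}^m$ directly: both satisfy $Z=\sum_{i=1}^m p_i\Pi_i\rho_i=\sum_{i=1}^m p_i\overline{\Pi}_i\rho_i$ with the same (unique) $Z$, and the linear independence of the $\rho_i$ is then invoked to split this operator identity term by term and conclude $\Pi_i=\overline{\Pi}_i$. You instead pin down each $\Pi_i$ \emph{intrinsically}, as the orthogonal projector onto $\ker(Z-p_i\rho_i)$, via the dimension count $r_i=\mathrm{Rank}\,\Pi_i\le\dim\ker(Z-p_i\rho_i)\le\dim Z^{-1}\bigl(\mathrm{Range}\,\rho_i\bigr)=r_i$, which uses the invertibility of $Z$ rather than linear independence at that step (linear independence still enters through Theorem~\ref{thm:Eldar03}, which supplies projectivity and $\mathrm{Rank}\,\Pi_i=r_i$). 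Both arguments rest on the same external inputs (Theorem~\ref{thm:Eldar03} and the uniqueness of $Z$ from \cite{Bae2013}), but yours buys an explicit closed-form description of the optimal POVM in terms of $Z$ and the $p_i\rho_i$ alone, and it avoids the paper's rather terse final step --- the deduction of $\Pi_i=\overline{\Pi}_i$ from $\sum_{i=1}^m p_i(\Pi_i-\overline{\Pi}_i)\rho_i=0$ --- which as written really needs a small supplementary argument (e.g.\ passing to the adjoint, using that the ranges of the $\rho_i$ form a direct sum to get $\rho_i\Pi_i=\rho_i\overline{\Pi}_i$, and then invoking $Z\Pi_i=p_i\rho_i\Pi_i$ together with the invertibility of $Z$). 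Your version is complete as stated.
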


{\it Proof. } {Let $\{\Pi_i\}_{i=1}^m$ and $\{\overline{\Pi}_i\}_{i=1}^m$ be two optimal POVMs} for $\mathrm{P}=\{p_i,\rho_i\}_{i=1}^m$. Then
\[\text{Rank}(\Pi_i)=r_i=\text{Rank}(\overline{\Pi}_i) \ \ \text{ for all }i=1,\cdots, m.\]
By (\ref{dual2}),
\[Z=\sum_{i=1}^m p_i\Pi_i\rho_i=\sum_{i=1}^mp_i\overline{\Pi}_i\rho_i.\]
By the result in \cite{Bae2013}, the operator $Z$ is unique. Since $\rho_i$ are linearly independent we have $\Pi_i=\overline{\Pi}_i$ for all $i=1,\cdots,m$.\hfill\qed

Let $\mathcal{P}(r_1,\cdots,r_m)$ be the set of all $m$-element projective measurements $\{\Pi_i\}_{i=1}^m$ such that $ \text{Rank}(\Pi_i)=r_i$ for all $ 1 \le i \le m$.

Define the followimg map $\mathbf{OP}$, which we call the {\it optimal POVM map}
\[\mathbf{OP}: \mathcal{E}(r_1,\cdots,r_m)\longrightarrow \mathcal{P}(r_1,\cdots,r_m)\] as follows: for each $\mathrm{P}=\{p_i,\rho_i\}_{i=1}^m\in \mathcal{E}(r_1,\cdots,r_m)$,
\[\mathbf{OP}(\mathrm{P})=\{\Pi_i\}_{i=1}^m \in \mathcal{P}(r_1,r_2,\cdots,r_m),\]
where $\{\Pi_i\}_{i=1}^m$ is the optimal POVM for the ensemble $\mathrm{P}$. {Note that Theorem \ref{thm:uniquePOVM} guarantees that $\mathbf{OP}$ is a {\it well-defined} map}.

We also  define a map
\[\mathfrak{R} : \mathcal{E}(r_1,\cdots,r_m)\longrightarrow\mathcal{E}(r_1,\cdots,r_m)\] as follows: for each $\mathrm{P}=\{p_i,\rho_i\}_{i=1}^m\in \mathcal{E}(r_1,\cdots,r_m)$, let $\mathbf{OP}(\mathrm{P})=\{\Pi_i\}_{i=1}^m$. Then
as constructed in Section \ref{sec:Bela}, we have $\mathfrak{R}(\mathrm{P})=\mathrm{Q}\in\mathcal{E}(r_1,\cdots,r_m)$, where $\mathrm{Q}=\{q_i,\sigma_i\}_{i=1}^m$ and $q_i\sigma_i=\dfrac{Z\Pi_i Z}{\tr(Z^2)}$. {Note that  $\mathfrak{R}$ is well-defined}.

Using the pretty good measurement one can also define PGM as a function \[\mathrm{PGM}:\mathcal{E}(r_1,r_2,\cdots,r_m)\longrightarrow\mathcal{P}(r_1,\cdots,r_m)\]
such that
\[\mathrm{PGM}(\mathrm{Q})=\{\Pi_i\}_{i=1}^m, \text{ where }\Pi_i =\sigma^{-1/2} \left( q_i \sigma_i \right) \sigma^{-1/2}.\]

We have defined two  functions $\mathbf{OP}$ and  $\mathrm{PGM}$ from the  set $\mathcal{E}(r_1,r_2,\cdots,r_m)$ to a set $\mathcal{P}(r_1,\cdots,r_m)$ and $\mathfrak{R}$ maps from $\mathcal{E}(r_1,r_2,\cdots,r_m)$ to itself. The relation between these three functions are given in the following theorem.
\begin{thm}
\label{thm5}
\label{thm:rela}
Let $\mathcal{E}(r_1,\cdots,r_m)$ be the set of LI states ensemble whose $i$-th  state is of rank $r_i$ and let
$\mathcal{P}(r_1,\cdots,r_m)$ be the set of projective POVMs such that $\mathrm{rank}(\Pi_i)=r_i$ for all $i=1\cdots,m$. Then we have the following relation
\begin{equation}
\label{mainresult}
\mathbf{OP}=\mathrm{PGM}\circ \mathfrak{R}.
\end{equation}

\hfill\qed
\end{thm}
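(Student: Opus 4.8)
The plan is to recognize that Theorem \ref{thm:rela} is, at bottom, a re-encoding of Theorem \ref{thm:Bela} once each of the three maps appearing in $\mathbf{OP}=\mathrm{PGM}\circ\mathfrak{R}$ has been checked to be well-defined on $\mathcal{E}(r_1,\cdots,r_m)$, so the work is mostly bookkeeping. I would fix an arbitrary $\mathrm{P}=\{p_i,\rho_i\}_{i=1}^m\in\mathcal{E}(r_1,\cdots,r_m)$ and unwind definitions. By Theorems \ref{thm:Eldar03} and \ref{thm:uniquePOVM}, $\mathrm{P}$ has a unique optimal POVM $\{\Pi_i\}_{i=1}^m$, which is a projective measurement with $\mathrm{rank}(\Pi_i)=r_i$; hence $\mathbf{OP}(\mathrm{P})=\{\Pi_i\}_{i=1}^m$. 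Let $(\{\Pi_i\}_{i=1}^m,Z)$ be the associated optimal dual pair. As in the proof of Lemma \ref{lemma1}, $Z$ is invertible: from \eqref{dual5} one has $Z\ge p_i\rho_i\ge 0$, so $\mathrm{supp}\,Z\supseteq\mathrm{supp}\,\rho_i$ for every $i$, and the ranges of the $\rho_i$ collectively span $\mathcal{H}$, forcing $Z>0$.

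Next I would verify that $\mathfrak{R}(\mathrm{P})=\mathrm{Q}$, with $\mathrm{Q}=\{q_i,\sigma_i\}_{i=1}^m$ and $q_i\sigma_i=Z\Pi_i Z/\tr(Z^2)$ as constructed in Section \ref{sec:Bela}, really lies in $\mathcal{E}(r_1,\cdots,r_m)$, so that $\mathrm{PGM}$ can be legitimately applied to it. Since $Z$ is invertible, $\mathrm{rank}\,\sigma_i=\mathrm{rank}\,\Pi_i=r_i$, and $\mathrm{Range}(q_i\sigma_i)=Z\,\mathrm{Range}(\Pi_i)$. Because $\{\Pi_i\}_{i=1}^m$ is a projective measurement, the subspaces $\mathrm{Range}(\Pi_i)$ are mutually orthogonal and sum to $\mathcal{H}$; the invertible map $Z$ carries this decomposition to another direct-sum decomposition of $\mathcal{H}$, so $\{\sigma_i\}_{i=1}^m$ is a linearly independent set with ranks $r_1,\cdots,r_m$. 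Thus $\mathrm{Q}\in\mathcal{E}(r_1,\cdots,r_m)$, and moreover, summing \eqref{qisi} over $i$, $\sigma:=\sum_i q_i\sigma_i=Z^2/\tr(Z^2)$ is invertible on all of $\mathcal{H}$, so $\mathrm{PGM}(\mathrm{Q})$ is defined directly on $\mathcal{H}$ (without the subspace restriction of the footnote).

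Finally I would invoke Theorem \ref{thm:Bela} with the ensemble $\mathrm{P}$ and its optimal dual pair $(\{\Pi_i\}_{i=1}^m,Z)$: it states precisely that $\{\Pi_i\}_{i=1}^m$ is the PGM of the ensemble $\mathrm{Q}$ built from that dual pair via \eqref{sigma} and \eqref{qii}, i.e. $\mathrm{PGM}(\mathrm{Q})=\{\Pi_i\}_{i=1}^m$. Chaining the identifications,
\[
(\mathrm{PGM}\circ\mathfrak{R})(\mathrm{P})=\mathrm{PGM}(\mathrm{Q})=\{\Pi_i\}_{i=1}^m=\mathbf{OP}(\mathrm{P}),
\]
and since $\mathrm{P}\in\mathcal{E}(r_1,\cdots,r_m)$ was arbitrary, $\mathbf{OP}=\mathrm{PGM}\circ\mathfrak{R}$.

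I do not anticipate a real obstacle: the substance is already carried by Theorem \ref{thm:Bela} together with Theorems \ref{thm:Eldar03}--\ref{thm:uniquePOVM}. The only points needing care are the well-definedness checks — that $\mathfrak{R}(\mathrm{P})$ is an LI ensemble with the prescribed ranks (so the codomain of $\mathfrak{R}$ is correct and $\mathrm{PGM}$ is applicable) and that $\sigma$ is invertible on $\mathcal{H}$ (so $\mathrm{PGM}(\mathfrak{R}(\mathrm{P}))$ is the honest pretty good measurement) — and stating explicitly that the composition is $\mathfrak{R}$ first, then $\mathrm{PGM}$.
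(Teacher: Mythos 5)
Your proposal is correct and follows essentially the same route as the paper's own (very terse) proof: establish that $\mathbf{OP}$ and $\mathfrak{R}$ are well-defined on $\mathcal{E}(r_1,\cdots,r_m)$ via Theorems \ref{thm:Eldar03} and \ref{thm:uniquePOVM}, then invoke Theorem \ref{thm:Bela} to identify $\mathrm{PGM}(\mathfrak{R}(\mathrm{P}))$ with the optimal POVM. The only difference is that you spell out the well-definedness checks (invertibility of $Z$, linear independence and ranks of the $\sigma_i$, invertibility of $\sigma$) that the paper merely asserts, which is a sound and welcome elaboration rather than a different argument.
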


\begin{proof}
{ Theorem \ref{thm:uniquePOVM} implies that $\mathbf{OP}$ and $\mathfrak{R}$ are well-defined maps. Then, Theorem \ref{thm:Bela} implies that equation \eqref{mainresult} is true.} 
\end{proof}

{ Note that, in general, one can't define the maps $\mathbf{OP}$ and $\mathfrak{R}$ because the optimal dual pair for $\mathrm{P}$ may not be unique, unless one is restricted to a case like the LI states.}

Moreover, we can show that the map $\mathfrak{R}$ is bijective. For this we first explicitly construct another function $\mathfrak{R}'$ on $\ens$, and later show that $\mathfrak{R}'$ is the left and right inverse of $\mathfrak{R}$, i.e., we show that $\mathfrak{R}^{-1}$ exists and it is equal to $\mathfrak{R}'$.

\section{Bijectivity of $\mathfrak{R}$}
\label{R'}

In order to show that the map $\mathfrak{R}: \mathcal{E}(r_1,\cdots,r_m)\longrightarrow\mathcal{E}(r_1,\cdots,r_m)$ is bijective, we construct the inverse of the map.

Let $\mathrm{Q}=\{q_i,\sigma_i\}_{i=1}^m$ be any element in $\mathcal{E}(r_1,\cdots,r_m)$, $\sigma = \sum_{i=1}^{m} q_i \sigma_i$ and let $\mathrm{PGM}(\mathrm{Q})=\{\Pi_i\}_{i=1}^m \in \mathcal{P}\left(r_1,\cdots,r_m\right)$, then for all $i=1,\cdots, m$, { $\Pi_i$ is given by the RHS in equation \eqref{PGMQ}.}

Consider the following decomposition of $\sigma^{1/2}$.

\begin{align}
\label{sigma12}
  \sigma^{1/2}  = & \;\left( \mathrm{Id} - \Pi_i + \Pi_i \right) \sigma^{1/2} \left( \mathrm{Id} - \Pi_i + \Pi_i \right) \notag \\
 = & \; \Pi_i \sigma^{1/2} \Pi_i + \left( \mathrm{Id} - \Pi_i \right) \sigma^{1/2} \left( \mathrm{Id} - \Pi_i \right) \notag \\ & + \Pi_i \sigma^{1/2} \left( \mathrm{Id} - \Pi_i \right) + \left( \mathrm{Id} - \Pi_i \right) \sigma^{1/2} \Pi_i .
\end{align} { Choose an orthonormal basis in which $\Pi_i$ and $\mathrm{Id}-\Pi_i$ are simultaneously diagonal.} In such a basis, $\sigma^{1/2}$ can be represented by the following matrix
\begin{equation}
 \label{sigma12inbasis}
 \sigma^{1/2} \longleftrightarrow \begin{pmatrix}
                  A_i &  B_i \\
                  B_i^\dag & C_i
                 \end{pmatrix}.
\end{equation}
Since $\sigma^{1/2} > 0 $, $\left( \begin{smallmatrix} A_i & B_i \\ B_i^\dag & C_i \end{smallmatrix} \right) > 0$. Note that
\begin{itemize}
 \item[(a)] $A_i$ is the $r_i\times r_i$ block matrix within $\left(\begin{smallmatrix}
                  A_i &  B_i \\
                  B_i^\dag & C_i
                 \end{smallmatrix}\right)$, and hence $A_i > 0$,
 \item[(b)] $C_i$ is the $(d-r_i)\times(d-r_i)$ block matrix within $\left(\begin{smallmatrix}
                  A_i &  B_i \\
                  B_i^\dag & C_i
                 \end{smallmatrix}\right)$, and hence $C_i > 0$, and
 \item[(c)] $B_i$ is the $r_i\times (d-r_i)$ block matrix within $\left(\begin{smallmatrix}
                  A_i &  B_i \\
                  B_i^\dag & C_i
                 \end{smallmatrix}\right)$.
\end{itemize}
\noindent Define
\begin{equation}
\label{Schurc}
\Delta_i \equiv C_i - B_i^\dag \left(A_i\right)^{-1} B_i.
\end{equation}
\noindent Note that $\Delta_i$ is the Schur complement of $A_i$ in $\left( \begin{smallmatrix} A_i & B_i \\ B_i^\dag & C_i \end{smallmatrix} \right)$.  From \cite{Boyd}\footnote{See Appendix A.5.5, page 651 in \cite{Boyd}.}, we see that when $\left( \begin{smallmatrix} A_i & B_i \\ B_i^\dag & C_i \end{smallmatrix} \right) > 0$, the Schur complement of $A_i$ in $\left( \begin{smallmatrix} A_i & B_i \\ B_i^\dag & C_i \end{smallmatrix} \right)$ is also strictly positive. 
\noindent Define $X_i$ to be an operator, which is represented by the following matrix using the same basis as in \eqref{sigma12inbasis}
\begin{align}
 \label{X_i}
 X_i \longleftrightarrow &  \begin{pmatrix}
                  A_i &  B_i \\
                  B_i^\dag & C_i
                 \end{pmatrix} - \begin{pmatrix}
                  0 &  0 \\
                  0 & \Delta_i
                 \end{pmatrix} = \begin{pmatrix}
                  A_i &  B_i \\
                  B_i^\dag & B_i^\dag A_i^{-1} B_i
                 \end{pmatrix} \notag  \\
                = & \begin{pmatrix}
                  \mathrm{Id}_{r_i} &  0 \\
                  B_i^\dag A_i^{-1} & \mathrm{Id}_{d-r_i}
                 \end{pmatrix} \begin{pmatrix}
                  A_i &  0 \\
                  0 & 0
                 \end{pmatrix} \begin{pmatrix}
                  \mathrm{Id}_{r_i} & A_i^{-1} B_i \\
                  0 & \mathrm{Id}_{d-r_i}
                 \end{pmatrix} .
\end{align}
Thus we see that
\begin{equation}
\label{SchurRank}
\mathrm{Rank} \, X_i = \mathrm{Rank} \, A_i  = r_i \text{ \ and  \ } X_i \ge 0
\end{equation}
Now define
\begin{equation}\label{pi}
 p_i \equiv \dfrac{\tr X_i}{\sum_{j=1}^m \tr X_j} \text{ \ and \ } \rho_i \equiv \dfrac{X_i}{\tr X_i}.
\end{equation}
Thus we obtain the ensemble $\mathrm{P} = \left\{p_i, \rho_i\right\}_{i=1}^{m}$ of quantum states and by \eqref{SchurRank}  $\mathrm{Rank} \, \rho_i = r_i$, for all $ i=1,\cdots,m$.

\begin{thm}
\label{thm1}
For any $\mathrm{Q}=\{q_i,\sigma_i\}_{i=1}^m\in \mathcal{E}(r_1,\cdots,r_m)$, define $\mathfrak{R}'(\mathrm{Q})=\mathrm{P}$, where $\mathrm{P}=\{p_i,\rho_i\}_{i=1}^m$ is an ensemble of quantum states as given in {\rm \eqref{pi}}. Then $\mathrm{P}\in \mathcal{E}(r_1,\cdots,r_m)$ and $\mathfrak{R}'$ defines a function on $\mathcal{E}(r_1,\cdots,r_m)$. Furthermore,
$\mathrm{PGM}(\mathrm{Q})$ is the optimal POVM for MED of $\mathrm{P}$.

\end{thm}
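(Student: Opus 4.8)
The plan is to verify, directly, the optimality conditions of Theorem~\ref{thm:optimality} for the ensemble $\mathrm{P}$, taking the dual operator proportional to $\sigma^{1/2}$ --- a choice dictated by reading the relation $\sigma = Z^2/\tr(Z^2)$ of Theorem~\ref{thm:Bela} backwards. The computational core is a single identity: in the basis of \eqref{sigma12inbasis}, $\sigma^{1/2}\Pi_i \longleftrightarrow \left(\begin{smallmatrix} A_i & 0 \\ B_i^\dag & 0 \end{smallmatrix}\right) \longleftrightarrow X_i\Pi_i$, so $(\sigma^{1/2} - X_i)\Pi_i = 0$, and, taking adjoints, $\Pi_i X_i = \Pi_i\sigma^{1/2}$. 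Moreover $\sigma^{1/2} - X_i \longleftrightarrow \left(\begin{smallmatrix} 0 & 0 \\ 0 & \Delta_i \end{smallmatrix}\right) \ge 0$ since $\Delta_i$, the Schur complement of the positive operator $\sigma^{1/2}$, is positive definite (as quoted from \cite{Boyd}).

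First I would check $\mathrm{P}\in\mathcal{E}(r_1,\cdots,r_m)$. By \eqref{SchurRank} and \eqref{pi}, each $\rho_i$ is a density operator, $p_i>0$, $\sum_i p_i = 1$, and $\mathrm{Rank}\,\rho_i = r_i$ (with $\sum_i r_i = d$ inherited from $\mathrm{Q}$), so only linear independence remains. Since $\sigma^{1/2}$ and $X_i$ agree on $\mathrm{Range}\,\Pi_i$, we get $\sigma^{1/2}(\mathrm{Range}\,\Pi_i) = X_i(\mathrm{Range}\,\Pi_i) \subseteq \mathrm{Range}\,X_i = \mathrm{Range}\,\rho_i$, and equality follows by a dimension count ($\mathrm{Rank}\,X_i = r_i = \mathrm{Rank}\,\Pi_i$). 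As $\{\Pi_i\}_{i=1}^m = \mathrm{PGM}(\mathrm{Q})\in\mathcal{P}(r_1,\cdots,r_m)$ is a projective resolution of the identity, the $\mathrm{Range}\,\Pi_i$ are mutually orthogonal and sum to $\mathcal{H}$; applying the invertible $\sigma^{1/2}$ yields $\mathcal{H} = \bigoplus_{i=1}^m \mathrm{Range}\,\rho_i$, which is precisely linear independence of $\{\rho_i\}_{i=1}^m$. For well-definedness of $\mathfrak{R}'$, note that the blocks in \eqref{sigma12inbasis} are the intrinsic operators $A_i = \Pi_i\sigma^{1/2}\Pi_i$, $B_i = \Pi_i\sigma^{1/2}(\mathrm{Id}-\Pi_i)$, $C_i = (\mathrm{Id}-\Pi_i)\sigma^{1/2}(\mathrm{Id}-\Pi_i)$ (with $A_i$ invertible on $\mathrm{Range}\,\Pi_i$ because $\sigma^{1/2}>0$), so $X_i$, and hence $\mathrm{P}$, is determined by $\mathrm{Q}$ alone, independently of the orthonormal basis chosen.

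Then I would prove the optimality claim. Set $Z := \sigma^{1/2}/\sum_{j=1}^m \tr X_j$; by \eqref{pi}, $p_i\rho_i = X_i/\sum_j\tr X_j$, hence $Z - p_i\rho_i = (\sigma^{1/2}-X_i)/\sum_j\tr X_j \ge 0$ for every $i$ --- relation 2 of Theorem~\ref{thm:optimality} (equation \eqref{dual5}). From $(\sigma^{1/2}-X_i)\Pi_i=0$ and its adjoint, $(Z-p_i\rho_i)\Pi_i = \Pi_i(Z-p_i\rho_i) = 0$ --- equation \eqref{dual1}. Finally, using $\Pi_i X_i = \Pi_i\sigma^{1/2}$ and $\sum_i\Pi_i = \mathrm{Id}$,
\[
 \sum_{i=1}^m p_i\Pi_i\rho_i = \frac{1}{\sum_j\tr X_j}\sum_{i=1}^m\Pi_i\sigma^{1/2} = \frac{\sigma^{1/2}}{\sum_j\tr X_j} = Z ,
\]
and likewise $\sum_i p_i\rho_i\Pi_i = Z$ via $X_i\Pi_i = \sigma^{1/2}\Pi_i$, so $Z$ is exactly the operator \eqref{dual2} attached to the POVM $\{\Pi_i\}_{i=1}^m$. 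Since this POVM satisfies both optimality conditions, Theorem~\ref{thm:optimality} makes it an optimal POVM for $\mathrm{P}$, and Theorem~\ref{thm:uniquePOVM} makes it the optimal one, i.e.\ $\mathbf{OP}(\mathrm{P}) = \mathrm{PGM}(\mathrm{Q})$.

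The step I expect to require the most care is the range identity $\mathrm{Range}\,\rho_i = \sigma^{1/2}(\mathrm{Range}\,\Pi_i)$ together with the ensuing linear independence of $\mathrm{P}$: this rests on $\{\Pi_i\}_{i=1}^m$ being an honest orthogonal resolution of the identity with $\mathrm{Rank}\,\Pi_i = r_i$, i.e.\ on the fact (already in force here) that the PGM of an LI ensemble is a projective measurement. Granting that, the rest is block-matrix bookkeeping around \eqref{sigma12}--\eqref{pi} plus the Schur-complement positivity quoted from \cite{Boyd}, and the checks of \eqref{dual1}, \eqref{dual2} and \eqref{dual5} are immediate consequences of the single identity $(\sigma^{1/2}-X_i)\Pi_i = 0$.
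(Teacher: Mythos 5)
Your proof is correct and follows essentially the same route as the paper's: the same dual operator $Z\propto\sigma^{1/2}$, the same block decomposition with the key identity $\left(\sigma^{1/2}-X_i\right)\Pi_i=0$, and the same Schur-complement positivity to verify \eqref{dual1} and \eqref{dual5}. The only (harmless) variation is in establishing linear independence of $\mathrm{P}$: the paper deduces $\mathrm{Range}\,p_i\rho_i=\mathrm{Range}\,q_i\sigma_i$ from \eqref{verify1} and inherits linear independence from $\mathrm{Q}$, whereas you push the orthogonal decomposition $\mathcal{H}=\bigoplus_{i}\mathrm{Range}\,\Pi_i$ through the invertible $\sigma^{1/2}$; both arguments are valid.
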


\begin{proof}
Let $\mathrm{PGM}(\mathrm{Q})=\{\Pi_i\}_{i=1}^m\in \mathcal{P}(r_1,\cdots,r_m)$ and define
\begin{equation}
 \label{ZQ}
Z \equiv \dfrac{\sigma^{1/2}}{\sum_{j=1}^{m}\mathrm{Tr}X_j}.
\end{equation}
Then for each $i=1,\cdots,m$,
\begin{align*}
Z-p_i\rho_i&=\frac{\sigma^{1/2}}{\sum_{j=1}^{m}\mathrm{Tr }X_j}-\frac{X_i}{\sum_{j=1}^{m}\mathrm{Tr}X_j}\\
&=\frac{1}{\sum_{j=1}^{m}\tr X_j}\left(\sigma^{1/2}-X_i\right).
\end{align*}
In the matrix representation used earlier we see that
\begin{equation}
 \label{st}
 \left( \sigma^{1/2} - X_i \right)\Pi_i \longleftrightarrow \begin{pmatrix}
                                           0 & 0 \\
                                           0 & \Delta_i
                                          \end{pmatrix}. \begin{pmatrix}
                                           \mathrm{Id}_{r_i} & 0 \\
                                           0 & 0
                                          \end{pmatrix} = \begin{pmatrix}
                                           0 & 0 \\
                                           0 & 0
                                          \end{pmatrix}.
\end{equation}
Thus $\left\{ \Pi_i \right\}_{i=1}^{m}$ and $Z$ satisfy the equation \eqref{dual1} for the ensemble $\left\{p_i, \rho_i \right\}_{i=1}^{m}$. Also, since the matrix associated with $\sigma^{1/2} - X_i$ is a Schur complement in  $\sigma^{1/2}$, $\sigma^{1/2} - X_i \ge 0$. Thus $\left\{ \Pi_i \right\}_{i=1}^{m}$ and $Z$ satisfy equation \eqref{dual5}. By theorem \ref{thm:optimality} this shows that the pair $\left( \left\{ \Pi_i \right\}_{i=1}^{m}, Z \right)$ is an optimal dual pair for the MED of $\mathrm{P}$.

Using the definition of $Z$ given in \eqref{ZQ}, and equations \eqref{dual1} and \eqref{PGMQ} we see that $p_i\rho_i$ should satisfy the following equation
\begin{equation}
\label{verify1}
\dfrac{p_i \rho_i \Pi_i p_i \rho_i}{\tr Z^2} = \dfrac{ Z \Pi_i Z}{\tr Z^2} = q_i \sigma_i, \; i\in \{1,\cdots,m\}.
\end{equation}
Hence $\mathrm{Range} \, q_i \sigma_i \subseteq \mathrm{Range} \, p_i \rho_i $, for all $i=1,\cdots,m$. But since  $\mathrm{Rank} \, q_i \sigma_i = \mathrm{Rank \ } p_i \rho_i = r_i$ for each $i=1,\cdots,m$, we get $\mathrm{Range} \, q_i \sigma_i = \mathrm{Range} \, p_i \rho_i $. Since the $\sigma_i$'s are linearly independent states, the $\rho_i$'s are also linearly independent states. This shows that $\mathrm{P} \in \ens$. From the construction, the $X_i$ are uniquely determined, and hence the map $\mathfrak{R}': \ens \longrightarrow \ens$ is well-defined and this completes the proof.\end{proof}

Hence in the theorem we show that

\begin{equation}
\label{op1}
\mathbf{OP}\left( \mathfrak{R}' \left( \mathrm{Q} \right) \right) = \mathrm{PGM} \left( \mathrm{Q} \right).
\end{equation}

We have shown that $\mathfrak{R}'$ is a well-defined map, and we will show that this map is actually the inverse of $\mathfrak{R}$. The map $\mathfrak{R}$ was defined using equations \eqref{sigma} and \eqref{qii}. We see from equation \eqref{verify1} that $ \mathfrak{R} \left( \mathrm{P} \right) = \mathrm{Q}$, and hence we get that  for each $\mathrm{Q}$ in $ \ens$,
\begin{equation}
\label{rightinverse}
\mathfrak{R} \circ \mathfrak{R}' \left( \mathrm{Q} \right) = \mathrm{Q}.
\end{equation}

To establish that $\mathfrak{R}'$ is the inverse of $\mathfrak{R}$, it remains to show the following.

\begin{thm}
\label{verify2}
$\mathfrak{R}'\circ \mathfrak{R} \left( \mathrm{P} \right) = \mathrm{P}$, for all $\mathrm{P} \in \ens$.
\end{thm}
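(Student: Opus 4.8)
The plan is to unwind both maps on the composite $\mathfrak{R}'\circ\mathfrak{R}$ and verify that everything cancels. Fix $\mathrm{P}=\{p_i,\rho_i\}_{i=1}^m\in\ens$ with optimal dual pair $(\{\Pi_i\}_{i=1}^m,Z)$. First I would note that $Z$ is positive definite: by condition~2 of Theorem~\ref{thm:optimality} one has $Z\ge p_i\rho_i\ge0$ for every $i$, so if $Z\ket{\psi}=0$ then $\rho_i\ket{\psi}=0$ for all $i$, whence $\ket{\psi}=0$ by the spanning hypothesis on $\bigcup_i\mathrm{Range}\,\rho_i$. Setting $\mathrm{Q}=\mathfrak{R}(\mathrm{P})=\{q_i,\sigma_i\}_{i=1}^m$, equation~\eqref{sigmaZ} gives $\sigma=Z^2/\tr(Z^2)$, hence $\sigma^{1/2}=Z/\sqrt{\tr(Z^2)}$, and Theorems~\ref{thm:Bela} and~\ref{thm:rela} give $\mathrm{PGM}(\mathrm{Q})=\mathbf{OP}(\mathrm{P})=\{\Pi_i\}_{i=1}^m$, so the construction of $\mathfrak{R}'(\mathrm{Q})$ runs with this very projective POVM.

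Next I would fix $i$, pass to an orthonormal basis diagonalising $\Pi_i$ and $\mathrm{Id}-\Pi_i$, and write $\sigma^{1/2}\leftrightarrow\left(\begin{smallmatrix}A_i & B_i\\ B_i^\dag & C_i\end{smallmatrix}\right)$ as in~\eqref{sigma12inbasis}; then $Z\leftrightarrow\sqrt{\tr(Z^2)}\left(\begin{smallmatrix}A_i & B_i\\ B_i^\dag & C_i\end{smallmatrix}\right)$ in the same basis. The crux is to identify $p_i\rho_i$ in this basis. From the optimality relation~\eqref{dual1} we have $Z\Pi_i=p_i\rho_i\Pi_i$ and $\Pi_i Z=\Pi_i p_i\rho_i$; reading off blocks forces the $(1,1)$, $(1,2)$ and $(2,1)$ blocks of $p_i\rho_i$ to be $\sqrt{\tr(Z^2)}\,A_i$, $\sqrt{\tr(Z^2)}\,B_i$ and $\sqrt{\tr(Z^2)}\,B_i^\dag$. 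Since $p_i\rho_i\ge0$ and $\mathrm{Rank}(p_i\rho_i)=r_i=\mathrm{Rank}(A_i)$, the rank factorisation appearing in~\eqref{X_i} — equivalently, a positive block matrix whose rank equals the rank of its (invertible) top-left block must have vanishing Schur complement — forces the $(2,2)$ block of $p_i\rho_i$ to be $\sqrt{\tr(Z^2)}\,B_i^\dag A_i^{-1}B_i$. Comparing with~\eqref{X_i}, this is exactly $p_i\rho_i=\sqrt{\tr(Z^2)}\,X_i$.

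From here the accounting is immediate: taking traces gives $\tr X_i=p_i/\sqrt{\tr(Z^2)}$, hence $\sum_{j}\tr X_j=1/\sqrt{\tr(Z^2)}$, so by the definition~\eqref{pi} the $i$-th weight of $\mathfrak{R}'(\mathrm{Q})$ is $\tr X_i/\sum_j\tr X_j=p_i$ and its $i$-th state is $X_i/\tr X_i=\rho_i$. As $i$ was arbitrary, $\mathfrak{R}'(\mathrm{Q})=\mathrm{P}$, which together with~\eqref{rightinverse} shows $\mathfrak{R}$ is a bijection with $\mathfrak{R}^{-1}=\mathfrak{R}'$. As a consistency check, the operator defined in~\eqref{ZQ} inside $\mathfrak{R}'(\mathrm{Q})$ then comes out as $\sigma^{1/2}/\sum_j\tr X_j=Z$, matching uniqueness of the optimal dual operator.

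I expect the block-structure step to be the only real obstacle. The optimality conditions pin down $p_i\rho_i$ only through its sandwich with $\Pi_i$, so a priori the $(\mathrm{Id}-\Pi_i)$-corner of $p_i\rho_i$ is unconstrained; it is precisely the hypothesis $\mathrm{Rank}\,\rho_i=r_i$ (that is, linear independence) together with positivity that collapses this corner to the Schur-complement value, matching the definition of $X_i$. Stating that converse-Schur-complement fact cleanly — and making sure $A_i$ is invertible, which rests on $\sigma=Z^2/\tr(Z^2)>0$ — is where the care is needed; the rest is substitution.
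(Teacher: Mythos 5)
Your proof is correct and follows essentially the same route as the paper's: both identify the $\Pi_i$-sandwiched blocks of $p_i\rho_i$ from the optimality relation \eqref{dual1}, then use positivity together with $\mathrm{Rank}\,\rho_i=r_i=\mathrm{Rank}\,A_i$ and the Schur-complement rank formula to force the remaining corner to equal $B_i^\dag A_i^{-1}B_i$, concluding $p_i\rho_i\propto X_i$. The only difference is bookkeeping: the paper carries an undetermined constant $c$ relating $Z$ and $Z'$ and fixes $c=1$ by taking traces at the end, whereas you normalize directly via $p_i\rho_i=\sqrt{\tr(Z^2)}\,X_i$.
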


\begin{proof}
For any $\mathrm{P} = \left\{ p_i, \rho_i \right\}_{i=1}^{m} \in \ens$, we obtain $\mathrm{Q} = \mathfrak{R} \left( \mathrm{P} \right) = \left\{ q_i, \sigma_i \right\}_{i=1}^{m}$, using equation \eqref{qisi}. Hence by Theorem \ref{thm5}, $\mathrm{PGM} \left( \mathrm{Q} \right) = \mathbf{OP} \left( \mathrm{P} \right).$ Let $\mathbf{OP}\left(\mathrm{P}\right)= \left\{ \Pi_i \right\}_{i=1}^{m}$. Thus by equation \eqref{dual2}, $Z = \sum_{i=1}^{m} p_i \rho_i \Pi_i = \sum_{i=1}^m p_i \Pi_i \rho_i$. By equation \eqref{sigmaZ}, we also have that $Z = \sqrt{\tr Z^2} \, \sigma^{1/2}$. Let $Z' = \dfrac{ \sigma^{1/2}}{\sum_{i=1}^{m} \tr X_j}$, where $Z'$ was introduced in equation \eqref{ZQ}. Thus $Z = c Z'$, where $c > 0$ is some constant. Let $\mathfrak{R}'\left( \mathrm{Q} \right) = \mathrm{P}' = \left\{ p'_i, \rho'_i \right\}_{i=1}^{m}$, where $p'_i$ and $\rho'_i$ were defined in equation \eqref{pi}. Then we obtain the following conclusions.
\begin{equation}\label{compare1}
\Pi_i p_i \rho_i \Pi_i = \Pi_i Z \Pi_i = c\Pi_i  Z'\Pi_i = c  \Pi_i p'_i \rho'_i \Pi_i,
\end{equation}

\begin{align}\label{compare2}
\Pi_i p_i \rho_i \left(\mathrm{Id}-\Pi_i\right) = \Pi_i Z \left(\mathrm{Id}-\Pi_i\right)
 & = c \Pi_i Z' \left(\mathrm{Id}-\Pi_i\right) \notag \\
&=   c \Pi_i p'_i \rho'_i \left(\mathrm{Id}-\Pi_i\right),
\end{align}
and
\begin{align}
\label{compare3}
\left( \mathrm{Id} - \Pi_i\right) p_i \rho_i  \Pi_i
 = \left( \mathrm{Id} - \Pi_i\right) Z \Pi_i
& =c \left(\mathrm{Id}-\Pi_i\right)Z' \Pi_i \notag \\
& = c\left( \mathrm{Id} - \Pi_i\right) p'_i \rho'_i \Pi_i.
\end{align}
Using equations \eqref{X_i} and \eqref{pi}, we may represent $p_i \rho_i$ in the same orthonormal basis used in equation \eqref{sigma12inbasis} as follows
$$p_i \rho_i\longleftrightarrow\dfrac{c}{\sum_{j=1}^{m} \tr \, X_j} \begin{pmatrix}
                                                   A_i & B_i \\
                                                   B_i^\dag & W_i
                                                  \end{pmatrix},$$
where $W_i$ an $(d-r_i)\times (d-r_i)$ matrix, which should be positive semidefinite.  $W_i - B_i^\dag A_i^{-1} B_i$ is the Schur complement of $A_i$ in $\left(\begin{smallmatrix} A_i & B_i \\                                               B_i^\dag & W_i \end{smallmatrix}\right)$. Using a result in \cite{Boyd},
\begin{equation}
\label{rhoranks}
\mathrm{Rank} \, p_i \rho_i = \mathrm{Rank} \, A_i + \mathrm{Rank} \, \left( W_i - B_i^\dag A_i^{-1} B_i\right),
\end{equation}
but since $\mathrm{Rank} \, p_i \rho_i = \mathrm{Rank} \, A_i = r_i$, we get that $\mathrm{Rank} \, \left( W_i - B_i^\dag A_i^{-1} B_i\right) = 0$. In other words, $W_i = B_i^\dag A_i^{-1} B_i$, and thus
\begin{equation}
\label{rhocompare}
p_i \rho_i\longleftrightarrow\dfrac{c}{\sum_{j=1}^{m} \tr \, X_j} \begin{pmatrix}
                                                   A_i & B_i \\
                                                   B_i^\dag & B^\dag A_i^{-1} B_i
                                                  \end{pmatrix},
\end{equation} hence $p_i \rho_i = c p'_i \rho'_i$. But note that $\sum_{i=1}^{m} \tr \, p_i \rho_i = c = 1$. Hence we obtain that $\mathrm{P} = \mathrm{P'}$. Thus $\mathfrak{R}'(\mathrm{Q})=\mathrm{P}$, and hence $\mathfrak{R}'\circ \mathfrak{R} \left( \mathrm{P} \right) = \mathrm{P}$, for all $\mathrm{P} \in \ens$.
\end{proof}

Thus we have proved that $\mathfrak{R}'$ is the left and right inverse of $\mathfrak{R}$, which implies that $\mathfrak{R}$ is a bijection. Also, note that we have explicitly constructed the mapping $\mathfrak{R}^{-1}$.

In the course of the proof of Theorem \ref{verify2}, we find a simplified condition for optimality { with respect to the one} given in Theorem \ref{thm:optimality}. We establish this below.


\begin{thm}
\label{simple1}
Let $\mathrm{P} = \left\{p_i, \rho_i \right\}_{i=1}^m \in \ens$. Then $\left\{ \Pi_i \right\}_{i=1}^m \in \pro$ is the optimal POVM for MED of $\mathrm{P}$ if { and only if}
\begin{itemize}
 \item[1.] $\left\{ \Pi_i\right\}_{i=1}^m$ satisfies equation \eqref{dual4} (or equivalently equation \eqref{dual1}) and
 \item[2.] $\sum_{j=1}^{m} p_j \rho_j \Pi_j { > } 0$.
\end{itemize}
\end{thm}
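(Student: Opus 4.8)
\emph{Proof strategy.} This is an ``if and only if'' whose two halves have rather different weight. The forward direction will be little more than a reading of Theorem~\ref{thm:optimality} in light of the standing assumption (built into $\ens$) that the ranges of the $\rho_i$ together span $\mathcal{H}$, i.e.\ $\sum_i r_i=d$. The reverse direction carries the content: starting from a projective POVM $\{\Pi_i\}_{i=1}^m\in\pro$ assumed only to satisfy condition~1 and the \emph{single} positivity requirement $Z>0$, I will recover the full family of slackness inequalities $Z\ge p_i\rho_i$ (equation~\eqref{dual5}) and then invoke Theorem~\ref{thm:optimality}. So the plan is: (i) dispose of the forward direction; (ii) for the reverse direction, observe that condition~1 confines the defect $Z-p_i\rho_i$ to the orthocomplement of $\mathrm{Range}\,\Pi_i$, and then use the rank constraint $\mathrm{Rank}\,\rho_i=r_i$ together with a Schur-complement identity to force that defect to be positive semidefinite.

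For the forward direction: if $\{\Pi_i\}$ is optimal, Theorem~\ref{thm:optimality} gives that it satisfies \eqref{dual4} (this is condition~1) and \eqref{dual5}, i.e.\ $Z-p_i\rho_i\ge0$ for all $i$, where $Z=\sum_j p_j\rho_j\Pi_j$. Then $Z=(Z-p_i\rho_i)+p_i\rho_i\ge0$, and $Z\ge p_i\rho_i\ge0$ forces $\mathrm{Range}\,\rho_i\subseteq\mathrm{Range}\,Z$ for every $i$; since $\sum_i\mathrm{Range}\,\rho_i=\mathcal{H}$, the operator $Z$ is invertible, hence $Z>0$, which is condition~2.

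For the reverse direction: assume $\{\Pi_i\}\in\pro$ satisfies condition~1, so that $Z:=\sum_j p_j\rho_j\Pi_j$ is Hermitian with $Z\Pi_i=p_i\rho_i\Pi_i$ and $\Pi_i Z=\Pi_i p_i\rho_i$ (equations~\eqref{dual1}--\eqref{dual2}), and assume $Z>0$. Fix $i$ and work in an orthonormal basis in which $\Pi_i\longleftrightarrow\left(\begin{smallmatrix}\mathrm{Id}_{r_i}&0\\0&0\end{smallmatrix}\right)$. From $(Z-p_i\rho_i)\Pi_i=0=\Pi_i(Z-p_i\rho_i)$ and Hermiticity, $Z-p_i\rho_i\longleftrightarrow\left(\begin{smallmatrix}0&0\\0&D_i\end{smallmatrix}\right)$ with $D_i=D_i^\dag$ of size $d-r_i$. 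Writing $Z\longleftrightarrow\left(\begin{smallmatrix}A_i&B_i\\B_i^\dag&C_i\end{smallmatrix}\right)$, positive definiteness of $Z$ makes $A_i>0$ (so $\mathrm{Rank}\,A_i=r_i$) and makes the Schur complement $S_i:=C_i-B_i^\dag A_i^{-1}B_i$ of $A_i$ in $Z$ strictly positive. Since $p_i\rho_i\longleftrightarrow\left(\begin{smallmatrix}A_i&B_i\\B_i^\dag&C_i-D_i\end{smallmatrix}\right)$ and $A_i$ is invertible, the rank-additivity identity for Schur complements (the one used at \eqref{rhoranks}, cf.~\cite{Boyd}) gives
\[
r_i=\mathrm{Rank}\,p_i\rho_i=\mathrm{Rank}\,A_i+\mathrm{Rank}\bigl((C_i-D_i)-B_i^\dag A_i^{-1}B_i\bigr)=r_i+\mathrm{Rank}(S_i-D_i),
\]
whence $D_i=S_i$ and $Z-p_i\rho_i\longleftrightarrow\left(\begin{smallmatrix}0&0\\0&S_i\end{smallmatrix}\right)\ge0$. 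As $i$ was arbitrary, \eqref{dual5} holds, and Theorem~\ref{thm:optimality} yields that $\{\Pi_i\}$ is the optimal POVM for $\mathrm{P}$.

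I expect the substance to lie entirely in step~(ii): seeing that the $m$ separate inequalities of \eqref{dual5} come for free once $Z>0$ and the ranks are correct. The mechanism is that condition~1 pins $Z-p_i\rho_i$ to the complement of $\mathrm{Range}\,\Pi_i$, the hypothesis $\mathrm{Rank}\,\rho_i=r_i$ forbids that defect from altering the Schur complement of the full-rank $\Pi_i$-block of $Z$, so the defect must equal that Schur complement, which $Z>0$ renders positive. An alternative route with the same conclusion is to form $\mathrm{Q}$ from $(\{\Pi_i\},Z)$ via $q_i\sigma_i=Z\Pi_i Z/\tr(Z^2)$, check (using condition~1 and $Z>0$) that $\mathrm{Q}\in\ens$ and $\mathrm{PGM}(\mathrm{Q})=\{\Pi_i\}$, and then re-run the block computation of Theorem~\ref{verify2} to identify $\mathfrak{R}'(\mathrm{Q})$ with $\mathrm{P}$; but the direct argument above is shorter and makes transparent that the rank constraint is what does the work.
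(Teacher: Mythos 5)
Your proof is correct and follows essentially the same route as the paper's: both directions hinge on choosing a basis diagonalizing $\Pi_i$, noting that condition 1 confines $Z-p_i\rho_i$ to the $(\mathrm{Id}-\Pi_i)$ block, and then using the rank constraint $\mathrm{Rank}\,\rho_i=r_i$ together with the Schur-complement rank identity to force that block to equal the (strictly positive) Schur complement of $A_i$ in $Z>0$. The only cosmetic differences are that you obtain $Z>0$ from optimality via the range inclusion implied by $Z\ge p_i\rho_i\ge 0$ where the paper sums these inequalities over $i$, and you read off the block form of $Z-p_i\rho_i$ at once rather than reconstructing the three off-diagonal blocks of $p_i\rho_i$ separately.
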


{ \begin{proof} First, let's assume that 1. and 2. are true. To prove that $\{ \Pi_i \}_{i=1}^{m}$ is the optimal POVM, we need to show that the inequality \eqref{dual5} is also true, i.e., we need to show that $\sum_{j=1}^{m} p_j \rho_j \Pi_j - p_i \rho_i \ge 0$, for all $ i \in \{1,2,\cdots,m\}$. To see this, choose an orthonormal basis in which $\Pi_i$ is diagonal. In this basis let $\sum_{j=1}^{m} p_j \rho_j \Pi_j $ have the matrix representation $\left( \begin{smallmatrix} A_i & B_i \\ B_i^\dag & C_i \end{smallmatrix}\right)$. Condition 2. implies that  $\left( \begin{smallmatrix} A_i & B_i \\ B_i^\dag & C_i \end{smallmatrix}\right) >0$. Note that $A_i$ is $r_i \times r_i$ and $A_i > 0$. Next we prove that $p_i \rho_i$ has the matrix representation given by $\left( \begin{smallmatrix} A_i & B_i \\ B_i^\dag & B_i^\dag A_i^{-1} B_i \end{smallmatrix} \right)$. To see this note that $$\Pi_i p_i \rho_i \Pi_i = \Pi_i \left( \sum_{j=1}^{m} p_j \rho_j \Pi_j \right) \Pi_i \leftrightarrow \left(\begin{matrix} A_i &  0 \\ 0 & 0 \end{matrix}\right) ,$$ $$\left(\mathrm{Id}-\Pi_i\right) p_i \rho_i \Pi_i = \left(\mathrm{Id}-\Pi_i\right) \left( \sum_{j=1}^{m} p_j \rho_j \Pi_j \right) \Pi_i \leftrightarrow \left(\begin{matrix} 0 &  0 \\ B_i^\dag & 0 \end{matrix}\right),$$ and $$\Pi_i p_i \rho_i \left(\mathrm{Id}-\Pi_i\right)=\Pi_i \left( \sum_{j=1}^{m} p_j \rho_j \Pi_j \right) \left(\mathrm{Id}-\Pi_i\right) \leftrightarrow \left(\begin{matrix} 0 &  B_i \\ 0 & 0 \end{matrix}\right).$$ Then the matrix representation of $p_i \rho_i$ is of the form $\left( \begin{smallmatrix} A_i & B_i \\ B_i^\dag & W_i \end{smallmatrix}\right)$. The Schur complement of $A_i$ in $\left( \begin{smallmatrix} A_i & B_i \\ B_i^\dag & W_i \end{smallmatrix}\right)$ is $W_i - B_i^\dag A_i^{-1} B_i$, and using the same reasoning employed between equations \eqref{rhoranks} and \eqref{rhocompare}, we get that $W_i = B_i^\dag A_i^{-1} B_i$. Thus $\sum_{j=1}^{m} p_j \rho_j \Pi_j - p_i \rho_i $ has the matrix representation $\left( \begin{smallmatrix} 0 & 0 \\ 0 & C_i - B_i^\dag A_i^{-1} B_i \end{smallmatrix} \right)$. Note that $C_i - B_i^\dag A_i^{-1} B_i $ is the Schur complement of $A_i$ in $\left( \begin{smallmatrix} A_i & B_i \\ B_i^\dag & C_i \end{smallmatrix}\right)$. Since $\left( \begin{smallmatrix} A_i & B_i \\ B_i^\dag & C_i \end{smallmatrix}\right) > 0$, the Schur complement of $A_i$ in $\left( \begin{smallmatrix} A_i & B_i \\ B_i^\dag & C_i \end{smallmatrix}\right)$ is also positive definite \cite{Boyd}. Thus $\left( \begin{smallmatrix} 0 & 0 \\ 0 & C_i - B_i^\dag A_i^{-1} B_i \end{smallmatrix} \right) \ge 0$, and hence  $\sum_{j=1}^{m} p_j \rho_j \Pi_j - p_i \rho_i  \ge 0$.

Conversely, let's assume that $\{ \Pi_i \}_{i=1}^m$ is the optimal POVM. Then $\{ \Pi_i \}_{i=1}^m$ must satisfy conditions \eqref{dual4} (condition \eqref{dual1}) and conditions  \eqref{dual5}. Thus $\sum_{j=1}^{m} p_j \rho_j \Pi_j \ge p_i \rho_i$, for all $ i \in \{1,2,\cdots,m\}$. Summing the LHS and RHS of this inequality over the index $i$ gives us $\sum_{j=1}^m p_j \rho_j \Pi_j \ge \frac{1}{m} \sum_{i=1}^m p_i \rho_i.$ Since the eigenvectors of the $\rho_i$'s span $\mathcal{H}$, we have that $\sum_{i=1}^{m} p_i \rho_i > 0$, and thus $\sum_{j=1}^m p_j \rho_j \Pi_j > 0$. 
 \end{proof}}

In fact, the optimality conditions can be simplified even further.

\begin{corollary}
\label{cor}
Let $\mathrm{P} \in \ens$ and $\left\{ \Pi_j \right\}_{j=1}^{m} \in \pro$. Then $\left\{ \Pi_j \right\}_{j=1}^{m}$ is the optimal POVM for the MED of $\mathrm{P}$ if { and only if} $\sum_{i=1}^{m} p_i \rho_i \Pi_i > 0$.
\end{corollary}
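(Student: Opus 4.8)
The plan is to deduce the corollary directly from Theorem~\ref{simple1}, which already reduces optimality (for $\mathrm{P}\in\ens$ and $\{\Pi_i\}_{i=1}^m\in\pro$) to the two conditions ``\eqref{dual4}'' and ``$\sum_j p_j\rho_j\Pi_j>0$''. The forward implication is then immediate: if $\{\Pi_i\}_{i=1}^m$ is optimal, condition~2 of Theorem~\ref{simple1} is exactly the assertion $\sum_{i=1}^m p_i\rho_i\Pi_i>0$. So the entire content of the corollary is the claim that, for a projective POVM in $\pro$, condition~1 of Theorem~\ref{simple1} is \emph{automatic} once $\sum_{i=1}^m p_i\rho_i\Pi_i>0$; after that, Theorem~\ref{simple1} closes the argument.

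To prove that redundancy, I would set $Z:=\sum_{i=1}^m p_i\rho_i\Pi_i$. Since $Z>0$ it is Hermitian, hence $Z=Z^\dagger=\sum_{i=1}^m p_i\Pi_i\rho_i$, so $Z$ already satisfies the two-sided representation \eqref{dual2}. Now invoke the defining relation of a projective POVM, $\Pi_i\Pi_j=\delta_{ij}\Pi_i$: right-multiplying $Z=\sum_j p_j\rho_j\Pi_j$ by $\Pi_i$ gives $Z\Pi_i=p_i\rho_i\Pi_i$, and left-multiplying $Z=\sum_j p_j\Pi_j\rho_j$ by $\Pi_i$ gives $\Pi_i Z=p_i\Pi_i\rho_i$, so $(Z-p_i\rho_i)\Pi_i=0=\Pi_i(Z-p_i\rho_i)$, which is \eqref{dual1}. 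Sandwiching once more, $\Pi_j Z\Pi_i$ equals both $p_i\Pi_j\rho_i\Pi_i$ and $p_j\Pi_j\rho_j\Pi_i$, whence $\Pi_j(p_j\rho_j-p_i\rho_i)\Pi_i=0$, i.e.\ \eqref{dual4}. Thus condition~1 of Theorem~\ref{simple1} holds, condition~2 is the hypothesis, and Theorem~\ref{simple1} yields that $\{\Pi_i\}_{i=1}^m$ is the (by Theorem~\ref{thm:uniquePOVM} unique) optimal POVM.

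I do not expect a genuine obstacle here: the corollary is essentially the observation that for a \emph{projective} POVM the algebraic optimality relation \eqref{dual1} is a free consequence of $\sum_i p_i\rho_i\Pi_i$ merely being Hermitian, so positivity supplies condition~2 for free and forces condition~1 as well. The only point that warrants a line of care is that the two sums $\sum_i p_i\rho_i\Pi_i$ and $\sum_i p_i\Pi_i\rho_i$ appearing in \eqref{dual2} must agree, and this is precisely what Hermiticity of $\sum_i p_i\rho_i\Pi_i$ guarantees; the rest is a one-step manipulation using $\Pi_i\Pi_j=\delta_{ij}\Pi_i$.
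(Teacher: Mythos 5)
Your proposal is correct and follows essentially the same route as the paper's own proof: both deduce $\sum_i p_i\rho_i\Pi_i=\sum_i p_i\Pi_i\rho_i$ from Hermiticity (a consequence of positivity), use $\Pi_i\Pi_j=\delta_{ij}\Pi_i$ to obtain the optimality relation \eqref{dual4} (equivalently \eqref{dual1}) for free, and then close by invoking Theorem~\ref{simple1}, with the necessity direction being immediate from condition~(2) of that theorem. The only cosmetic difference is that you derive \eqref{dual1} directly and then sandwich to get \eqref{dual4}, whereas the paper sandwiches the difference of the two sums in one step.
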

\begin{proof}
{ For sufficiency, we have to prove condition $(2)$ in Theorem \ref{simple1}. This was already proved in \cite{BC09}, but for completeness we still prove it here:} since $\sum_{i=1}^{m} p_i \rho_i \Pi_i > 0$, $\sum_{i=1}^{m} p_i \rho_i \Pi_i = \sum_{i=1}^{m} p_i  \Pi_i \rho_i$. Thus we have that
\begin{align}
\label{compact}
  & \Pi_j \left( \sum_{i=1}^{m} p_i  \Pi_i \rho_i - \sum_{i=1}^{m} p_i \rho_i \Pi_i \right) \Pi_k \notag \\ = & \Pi_j  \left( p_j \rho_j - p_k \rho_k \right)\Pi_k \notag  \\& = 0,
\end{align}
where we used the fact that $\left\{ \Pi_i \right\}_{i=1}^m$ is a projective measurement. Thus $\left\{ \Pi_i \right\}_{i=1}^m$ satisfy the condition \eqref{dual1}. Also, note that $\sum_{i=1}^{m} p_i \rho_i \Pi_i > 0$, so by Theorem \ref{simple1}, $\left\{ \Pi_j \right\}_{j=1}^{m}$ is the optimal POVM for the MED of $\mathrm{P}$.
{ For the necessity, assume that $\left\{ \Pi_j \right\}_{j=1}^{m}$ is the optimal POVM for the MED of $\mathrm{P}$. Then condition  ($2$) of Theorem \ref{simple1} is true.}
\end{proof}

Hence Theorem \ref{simple1} and Corollary \ref{cor} tell us that the optimality conditions for the MED of ensembles of LI states are actually simpler than for the case of more general ensembles of states. This also generalizes the results in \cite{Helstrom82, Pozza15}.

\section{Fixed Points of $\mathfrak{R}$}
\label{sec:fix}

Let $\mathrm{P} \in \ens$ be a fixed point of $\mathfrak{R}$, i.e., $\mathfrak{R} \left( \mathrm{P} \right) = \mathrm{P}$. Then by \eqref{mainresult} we have $$ \mathbf{OP} \left( \mathrm{P} \right) = \mathrm{PGM} \left( \mathrm{P} \right).$$ In other words,  if $\mathrm{P}$ is a fixed point of $\mathfrak{R}$, then its PGM is the optimal POVM. In the following theorem,  we give necessary and sufficient conditions for $\mathrm{P}$ to be a fixed point of $\mathfrak{R}$.

\begin{thm}
\label{fix}
Let $\mathrm{P} = \left\{ p_i, \rho_i \right\}_{i=1}^m$ be an element in $\ens$. Then $\mathfrak{R}(\mathrm{P})=\mathrm{P}$ if and only if $\sum_{i=1}^{m} \Pi_i \rho^{1/2} \Pi_i = c \, \mathrm{Id}$, for some constant $c>0$, where $\left\{ \Pi_i \right\}_{i=1}^{m} = \mathrm{PGM} \left( \mathrm{P} \right)$ and $\rho = \sum_{i=1}^{m} p_i \rho_i$.
\end{thm}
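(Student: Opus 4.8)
The plan is to unwind both directions through the explicit description of $\mathfrak{R}$ and $\mathfrak{R}'$ and the simplified optimality condition of Corollary \ref{cor}. Fix $\mathrm{P} = \{p_i,\rho_i\}_{i=1}^m \in \ens$, let $\{\Pi_i\}_{i=1}^m = \mathrm{PGM}(\mathrm{P})$, and set $\rho = \sum_i p_i\rho_i$. Since the $\rho_i$ are LI, $\rho > 0$, so $\rho^{1/2}$ is well defined and invertible, and by definition $\Pi_i = \rho^{-1/2}(p_i\rho_i)\rho^{-1/2}$. The first step is a translation: I want to show that the condition ``$\sum_i \Pi_i\rho^{1/2}\Pi_i = c\,\mathrm{Id}$'' is exactly what is needed to run the inversion machinery of Section \ref{R'} with the roles reversed, i.e.\ to certify that $\{\Pi_i\}_{i=1}^m$ is the \emph{optimal} POVM for $\mathrm{P}$ (not merely its PGM) together with the self-consistency $\mathfrak{R}(\mathrm{P}) = \mathrm{P}$.

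For the ``if'' direction: assume $\sum_i \Pi_i\rho^{1/2}\Pi_i = c\,\mathrm{Id}$ with $c>0$. The natural candidate for the dual operator is $Z = c^{-1}\rho^{1/2}$ up to normalization; more precisely I would guess $Z$ proportional to $\rho^{1/2}$ and check the two conditions of Theorem \ref{simple1} (equivalently Corollary \ref{cor}) for the pair $(\{\Pi_i\},Z)$. The key computation is to evaluate $\sum_i p_i\rho_i\Pi_i = \sum_i \rho^{1/2}\Pi_i\rho^{1/2}\cdot(\text{something})$; using $p_i\rho_i = \rho^{1/2}\Pi_i\rho^{1/2}$ one gets $\sum_i p_i\rho_i\Pi_i = \rho^{1/2}\big(\sum_i \Pi_i\rho^{1/2}\Pi_i\big)\rho^{-1/2}\cdot$(care with ordering) — this is where the hypothesis $\sum_i\Pi_i\rho^{1/2}\Pi_i = c\,\mathrm{Id}$ collapses the sum to something proportional to $\rho^{1/2}$, hence positive definite. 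Corollary \ref{cor} then gives that $\{\Pi_i\}$ is optimal, i.e.\ $\mathbf{OP}(\mathrm{P}) = \{\Pi_i\} = \mathrm{PGM}(\mathrm{P})$. By Theorem \ref{thm5}, $\mathbf{OP} = \mathrm{PGM}\circ\mathfrak{R}$, and since $\mathrm{PGM}$ is injective on the relevant set (which follows from bijectivity of $\mathfrak{R}$ established in Section \ref{R'}, as $\mathrm{PGM} = \mathbf{OP}\circ\mathfrak{R}^{-1}$ and $\mathbf{OP}$ is a bijection by Theorem \ref{thm:uniquePOVM} and the surjectivity implied by $\mathfrak{R}$ being onto), we conclude $\mathfrak{R}(\mathrm{P}) = \mathrm{P}$.

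For the ``only if'' direction: assume $\mathfrak{R}(\mathrm{P}) = \mathrm{P}$. Then $\mathbf{OP}(\mathrm{P}) = \mathrm{PGM}(\mathfrak{R}(\mathrm{P})) = \mathrm{PGM}(\mathrm{P}) = \{\Pi_i\}$, so $\{\Pi_i\}$ is optimal; let $Z$ be the unique dual operator, which by \eqref{sigmaZ} and \eqref{sigma} equals $\sqrt{\tr Z^2}\,\sigma^{1/2}$ where $\sigma = \sum_i q_i\sigma_i$ with $\{q_i,\sigma_i\} = \mathfrak{R}(\mathrm{P}) = \mathrm{P}$; hence $\sigma = \rho$ and $Z = \sqrt{\tr Z^2}\,\rho^{1/2}$. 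Now by \eqref{dual2}, $Z = \sum_i p_i\Pi_i\rho_i = \sum_i p_i\rho_i\Pi_i$. Substituting $p_i\rho_i = \rho^{1/2}\Pi_i\rho^{1/2}$ and using $Z \propto \rho^{1/2}$, I get $\rho^{1/2}\big(\sum_i \Pi_i\rho^{1/2}\Pi_i\big)\rho^{-1/2} \propto \rho^{1/2}$ (again minding the ordering — one should symmetrize using both expressions for $Z$ in \eqref{dual2}), and invertibility of $\rho^{1/2}$ yields $\sum_i \Pi_i\rho^{1/2}\Pi_i = c\,\mathrm{Id}$ with $c = \sqrt{\tr Z^2} > 0$.

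The main obstacle I anticipate is purely bookkeeping of operator orderings: $p_i\rho_i = \rho^{1/2}\Pi_i\rho^{1/2}$ holds, but $\sum_i p_i\rho_i\Pi_i = \sum_i\rho^{1/2}\Pi_i\rho^{1/2}\Pi_i$ is not obviously Hermitian term-by-term, so one must use \emph{both} identities $Z = \sum_i p_i\rho_i\Pi_i = \sum_i p_i\Pi_i\rho_i$ to symmetrize, or equivalently work with $\rho^{-1/2}Z\rho^{-1/2} = \sum_i \Pi_i\rho^{1/2}\Pi_i$ directly — that last identity, obtained by conjugating \eqref{dual2} appropriately, is really the crux and makes both directions almost immediate once it is in hand. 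A secondary point to be careful about is justifying that $\mathrm{PGM}$ is injective where needed; the cleanest route is to note $\mathfrak{R}' = \mathfrak{R}^{-1}$ exists (Theorems \ref{thm1}, \ref{verify2}) and $\mathrm{PGM} = \mathbf{OP}\circ\mathfrak{R}^{-1}$ with $\mathbf{OP}$ injective, so no separate argument is needed.
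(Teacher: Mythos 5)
Your ``only if'' direction is correct and is essentially the paper's own argument: from $\mathfrak{R}(\mathrm{P})=\mathrm{P}$ one gets $Z=c\rho^{1/2}=\sum_i p_i\rho_i\Pi_i$, and substituting $p_i\rho_i=\rho^{1/2}\Pi_i\rho^{1/2}$ and cancelling the invertible factor $\rho^{1/2}$ gives $\sum_i\Pi_i\rho^{1/2}\Pi_i=c\,\mathrm{Id}$. The first half of your ``if'' direction is also sound: since $\sum_i p_i\rho_i\Pi_i=\sum_i\rho^{1/2}\Pi_i\rho^{1/2}\Pi_i=\rho^{1/2}\bigl(\sum_i\Pi_i\rho^{1/2}\Pi_i\bigr)=c\rho^{1/2}>0$, Corollary \ref{cor} gives that the PGM is optimal, i.e.\ $\mathbf{OP}(\mathrm{P})=\mathrm{PGM}(\mathrm{P})$.

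The gap is in the last step of the ``if'' direction. You pass from $\mathrm{PGM}(\mathfrak{R}(\mathrm{P}))=\mathbf{OP}(\mathrm{P})=\mathrm{PGM}(\mathrm{P})$ to $\mathfrak{R}(\mathrm{P})=\mathrm{P}$ by invoking injectivity of $\mathrm{PGM}$, which you derive from ``$\mathbf{OP}$ is a bijection by Theorem \ref{thm:uniquePOVM}.'' That theorem only says each ensemble has a \emph{unique} optimal POVM, i.e.\ that $\mathbf{OP}$ is well defined; it does not make $\mathbf{OP}$ injective, and in fact neither $\mathbf{OP}$ nor $\mathrm{PGM}$ is injective. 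Concretely, take $\rho_1=\ketbra{0}{0}$, $\rho_2=\ketbra{1}{1}$ with $p_1=1/3$, $p_2=2/3$: here $\mathbf{OP}(\mathrm{P})=\mathrm{PGM}(\mathrm{P})=\{\ketbra{0}{0},\ketbra{1}{1}\}$, yet $\mathfrak{R}(\mathrm{P})=\{1/5,\ketbra{0}{0};\,4/5,\ketbra{1}{1}\}\neq\mathrm{P}$, and both ensembles have the same PGM and the same optimal POVM. This example also shows that ``PGM is optimal'' is genuinely weaker than ``fixed point of $\mathfrak{R}$,'' so the implication you want cannot follow from optimality of the PGM alone; the full strength of $\Pi_i\rho^{1/2}\Pi_i=c\,\Pi_i$ (with the \emph{same} $c$ for all $i$) must be used again. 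The repair is short and bypasses the paper's longer Schur-complement computation: having established optimality, $Z=\sum_i p_i\rho_i\Pi_i=c\rho^{1/2}$ is the dual operator by \eqref{dual2}, so $\tr(Z^2)=c^2\tr\rho=c^2$ and $q_i\sigma_i=Z\Pi_iZ/\tr(Z^2)=\rho^{1/2}\Pi_i\rho^{1/2}=p_i\rho_i$, whence $\mathfrak{R}(\mathrm{P})=\mathrm{P}$ directly from the definition of $\mathfrak{R}$. (The paper instead proves $\mathfrak{R}^{-1}(\mathrm{P})=\mathrm{P}$ via the block-matrix construction of $\mathfrak{R}'$; either route works, but some direct computation of this kind is indispensable.)
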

\begin{proof}
Suppose that $\sum_{i=1}^{m} \Pi_i \rho^{1/2} \Pi_i = c \, \mathrm{Id}$, for some constant $c>0$, where $\left\{ \Pi_i \right\}_{i=1}^{m} = \mathrm{PGM} \left( \mathrm{P} \right)$ and $\rho = \sum_{i=1}^{m} p_i \rho_i$. Then for each $ i = 1,2 \cdots m$,
\begin{equation}\label{constant}
\Pi_i \rho^{1/2} \Pi_i = c \Pi_i.
\end{equation}
Let $\mathfrak{R}^{-1}\left( \mathrm{P} \right) = \mathrm{P}' = \left\{ p'_i, \rho'_i \right\}_{i=1}^m$. By \eqref{op1}, $\mathbf{OP}(\mathrm{P}')=\mathrm{PGM}(\mathrm{P})=\{\Pi_i\}_{i=1}^m$ and from equation \eqref{sigmaZ} we get the optimal dual pair  $\left(  \left\{ \Pi_i \right\}_{i=1}^m, t \rho^{1/2} \right)$  for MED of $\mathrm{P}'$, where $t >0$ is some constant. Now we follow the same sequence of steps as in proof of Theorem \ref{verify2} to show that $\mathfrak{R}^{-1}\left( \mathrm{P} \right) = \mathrm{P}$ by using the relation \eqref{constant}

Let us fix an orthonormal basis which diagonalizes $\Pi_i$ and we use this basis to obtain matrix representations.
Consider the following matrix representation of  $t \rho^{1/2}$;
\begin{align}
 \label{reprho}
 t \rho^{1/2} \longleftrightarrow \begin{pmatrix}  A_i & B_i \\ B_i^\dag & D_i \end{pmatrix},
 \end{align}
where $A_i$ represents $t \Pi_i  \rho^{1/2} \Pi_i$. Then by \eqref{constant}, $A_i=tc \ \mathrm{Id}_{r_i}$. By the optimality conditions \eqref{dual1} we have $t \rho^{1/2}\Pi_i = p'_i \rho'_i \Pi_i$ and thus the matrix representation of $p_i'\rho_i'$ is given by
\begin{equation}
\label{last1}
 p'_i \rho'_i  \longleftrightarrow \begin{pmatrix} tc \mathrm{Id}_{r_i} & B_i \\ B_i^\dag &  \frac{1}{tc} B_i^\dag B_i \end{pmatrix},
\end{equation}
\noindent where $\frac{1}{tc} B_i^\dag B_i$ is obtained from equation \eqref{rhocompare} with   $A_i^{-1}=\frac{1}{tc} \ \mathrm{Id}_{r_i}$. Note that from equation \eqref{qisi}, $p_i \rho_i = \dfrac{{p'_i}^2 \rho'_i \Pi_i \rho'_i}{t^2}$, which has the following matrix representation
\begin{align}
\label{last2}
 & p_i \rho_i  \longleftrightarrow \notag \\ \; & \dfrac{1}{t^2} \begin{pmatrix} tc \mathrm{Id}_{r_i} & B_i \\ B_i^\dag &  \frac{1}{tc} B_i^\dag B_i \end{pmatrix} \begin{pmatrix} \mathrm{Id_{r_i}} & 0 \\ 0 & 0 \end{pmatrix}\begin{pmatrix} tc \mathrm{Id}_{r_i} & B_i \\ B_i^\dag &  \frac{1}{tc} B_i^\dag B_i \end{pmatrix} \notag \\
 = \;  & \frac{c}{t} \begin{pmatrix} tc \mathrm{Id}_{r_i} & B_i \\ B_i^\dag &  \frac{1}{tc} B_i^\dag B_i \end{pmatrix}.
\end{align}
\noindent Comparing equations \eqref{last1} and \eqref{last2} we get that $p_i \rho_i = (c/t) p'_i \rho'_i$. Summing over $i$ and taking trace gives us that $c = t$. Thus $p_i \rho_i = p'_i \rho'_i$, for all $ 1 \le i \le m$. Thus we get that $\mathrm{P}' = \mathrm{P}$, or that $\mathfrak{R}\left( \mathrm{P} \right) = \mathrm{P}$.

Conversely, for some $\mathrm{P}=\{p_i,\rho_i\}_{i=1}^m\in \ens$, let $\mathfrak{R}(\mathrm{P})=\mathrm{P}$.
Then $\mathbf{OP} \left( \mathrm{P} \right) = \mathrm{PGM} \left( \mathrm{P} \right) =\left\{ \Pi_i \right\}_{i=1}^{m}$. Let $(\left\{ \Pi_i \right\}_{i=1}^{m}, Z)$ be the optimal dual pair for MED of $P$. Then by \eqref{sigmaZ} and \eqref{dual2}, we have, for some constant $c>0$,
\begin{equation}\label{zzz}
Z=c\rho^{1/2}=\sum_{i=1}^m p_i\rho_i \Pi_i
\end{equation}
Since $\Pi_i=\rho^{-1/2}p_i\rho_i\rho^{-1/2}$ with $\rho=\sum_{i=1}^mp_i\rho_i$, we get $p_i\rho_i=\rho^{1/2}\Pi_i\rho^{1/2}$. Thus we have
$p_i \rho_i \Pi_i = \rho^{1/2} \Pi_i \rho^{1/2} \Pi_i$. Then by \eqref{zzz},
\[c\rho^{1/2}=Z=\sum_{i=1}^m p_i \rho_i \Pi_i=\rho^{1/2}\sum_{i=1}^m\Pi_i\rho^{1/2}\Pi_i\]
and hence $\sum_{i=1}^m\Pi_i\rho^{1/2}\Pi_i=c \mathrm{Id}$.
\end{proof}

Theorem \ref{fix} tells us that the PGM is the optimal POVM when the probability of successfully identifying the $i$-th state is proportional to $\mathrm{Rank} \, \rho_i$, i.e., $p_i \tr \Pi_i \rho_i \propto r_i$, for all $ 1 \le i \le m$. In \cite{Masahide98} it was shown that when the states $\rho_i$ are LI and pure, i.e., $\rho_i \longrightarrow \ket{\psi_i}\bra{\psi_i}$ and the $\ket{\psi_i}$'s are LI, then the PGM is the optimal POVM when the probability of successfully identifying the $i$-th state is independent of $i$, i.e. $p_i \bra{\psi_i} \Pi_i \ket{\psi_i} = c$, for some constant $c > 0$. Hence Theorem \ref{fix} reduces to the result in \cite{Masahide98} for the case of linearly independent pure state ensembles.

\section{Discussion and Conclusion}
\label{sec:last}

In this work we generalize the results for the MED problem of LI pure state ensembles to mixed state ensembles. Firstly, we show that there exists a map $\mathfrak{R}$ on the set of LI ensembles, such that the pretty good measurement of the image of this map is the optimal POVM for the MED of the pre-image.  Next, we show that $\mathfrak{R}$ is bijective, and we explicitly construct $\mathfrak{R}^{-1}$. This generalizes results obtained in \cite{Mochon2006}. The fixed points of $\mathfrak{R}$ are seen to be ensembles whose pretty good measurements are optimal for MED. In Theorem \ref{fix} we obtain necessary and sufficient conditions for an ensemble to be a fixed point of $\mathfrak{R}$. It is seen that for such cases, the probability of successfully detecting the $i$-th state is proportional to the rank of that state for all $1 \le i \le m$. This generalizes the result for LI pure state ensembles in \cite{Masahide98}, where it was shown that the probability of successfully detecting the $i$-th state is independent of $i$. Also, in Theorem \ref{simple1} and Corollary \ref{cor} we show that the optimality conditions for the MED of LI states is in fact simpler than the optimality conditions for general ensembles of states. This generalizes a result in obtained in \cite{Helstrom82, Pozza15}.

While the geometric structure of the MED problem \cite{Bae2013} has been employed to study it, particularly for the case of qubit systems \cite{Ha13,Ha15,Bae13}, the structure which Belavkin introduced in \cite{Belavkin75} has received scant attention. In \cite{Mochon2006}, Mochon rediscovered the structure for the case of pure state ensembles, and proved the existence of the map $\mathfrak{R}$ for LI pure states ensembles. This map was later employed in \cite{Singal2016} to obtain the optimal POVM. Equations \eqref{mainresult} tells us that to solve the MED problem it suffices to know the map $\mathfrak{R}$. However the construction of $\mathfrak{R}$ requires the optimal POVM. In fact it is a difficult problem to get an exact form of $\mathfrak{R}$. On the other hand, we have constructed $\mathfrak{R}^{-1}$ and thus if one can invert $\mathfrak{R}^{-1}$, then one solves the MED problem. This was done for the case of LI pure state ensembles in \cite{Singal2016}, where the authors used the implicit function theorem to do so. We would like to see if this can be generalized to the case of LI mixed state ensembles as well. Work for this is under progress.

\section*{Acknowledgments} 

We thank the referee for their valuable comments.


\begin{thebibliography}{99}

\bibitem{Belavkin75} V. P. Belavkin, Stochastics 1, 315 (1975).




\bibitem{Mochon2006}
C. Mochon,  Phys. Rev. A 73, 032328 (2006).


\bibitem{YF00} Eldar and Forney, IEEE Trans. Inform. Theory 47, 858 (2001).

\bibitem{BelavkinMaslov1988}
V.P. Belavkin and V. Maslov, `Design of Optimal Dynamic Analyzers: Mathematical Aspects of Wave Pattern Recognition', {\it Mathematical Aspects Of Computer Engineering Advances in Science and Technology in the USSR} Mir Publishers, (1988).



\bibitem{Helstrom76} C.W. Helstrom, {\it Quantum Detection and Estimation
Theory} (Academic Press, New York, 1976)

\bibitem{EldrMegretskiVerg2003}
Y.C. Eldar, A. Magretski and G.C. Verghese, IEEE Trans. Inform. Theory f 49, 1007 (2003).


\bibitem{Holeo1973} A.S. Holevo,  J. Multivar. Anal. 3, 337 (1973).


\bibitem{YKM75} H.P. Yuen, R.S. Kennedy, and M. Lax, 
IEEE Trans. Inform. Theory IT-21, 125 (1975)

\bibitem{WHBC18} Weir, Hughes,
Barnett and Croke, Quantum Science and Technology 3, 035003 (2018)

\bibitem{RPMG17} M. Rosati, G. De Palma, A. Mari, and V. Giovannetti, Phys. Rev. A 95, 042307 (2017)

\bibitem{KGDS15} Krovi H, Guha S, Dutton Z and da Silva M P 2015 Phys. Rev. A 92 062333

\bibitem{C00} A. Chefles, Contemporary Physics 41 401 (2000)

\bibitem{BHH04} J. A. Bergou, U. Herzog and M. Hillery, Ch. 11 Quantum State Estimation (Springer) pp 417-465 (2004)

\bibitem{BC09} S. M. Barnett and S. Croke, Advances in Optics and Photonics 1 238 (2009)

\bibitem{B10} J. A. Bergou, Journal of Modern Optics 57 160 (2010)

\bibitem{BK15} J. Bae and L.-C. Kwek, J. Phys. A: Math. Theor. 48
083001 (2015)


\bibitem{Bae2013}
J. Bae, New J. Phys. 15,  073037 (2013).

\bibitem{Singal2016}
T. Singal and S. Ghosh,  J. Phys. A: Math. Theor. 49, 165304 (2016).

\bibitem{Eldar2003} Y.C. Eldar,  Phys. Rev. A 68, 052303 (2003).

\bibitem{Boyd} S. Boyd  and L. Vandenberghe, {\it Convex Optimization}, Cambridge University Press (Appendix A.5.5, page 651)(2004)

\bibitem{Ban97} M. Ban, K. Kurokawa, R. Momose et al. Int J Theor Phys 36, 1269 (1997). 

\bibitem{Sasaki97} M. Sasaki, S. M. Barnett, R. Jozsa, M. Osaki  and O. Hirota,  Phys. Rev. A 59 3325, (1999)

\bibitem{Ha13} D. Ha and Y. Kwon,  Phys. Rev. A 87, 062302 (2013).

\bibitem{Ha15} D. Ha and Y. Kwon, Phys. Rev. A 90, 022330 (2014).

\bibitem{JRF02} M. Ježek, J. Řeháček, and J. Fiurášek, Phys. Rev. A 65, 060301(R) (2002).

\bibitem{Bae13} J. Bae and Won-Young Hwang
Phys. Rev. A 87, 012334 (2013).

\bibitem{Weir18} Weir G, Barnett S, and Croke S, Phys. Rev. A 96, 022312 (2017).

\bibitem{J09} Jon Tyson, https://arxiv.org/abs/0902.0395


\bibitem{Masahide98} M. Sasaki, K. Kato, M. Izutsu, and O. Hirota, Phys. Rev. A 58, 146 (1998)

\bibitem{Helstrom82} C. W. Helstrom, IEEE Trans. Inf. Theory IT-28, 359 (1982)

\bibitem{Pozza15} Nicola Dalla Pozza and Gianfranco Pierobon, Phys. Rev. A 91, 042334 (2015)


\end{thebibliography}
\end{document}